\documentclass[11pt]{article}

\usepackage{amsmath,amssymb,amsthm,mathtools}
\usepackage[a4paper,margin=1in]{geometry}
\usepackage{cite}
\usepackage{tikz}
\usepackage{aliascnt}
\usepackage{microtype}
\usepackage{enumitem}
\usepackage[hidelinks,bookmarks=false]{hyperref}
\hypersetup{
  pdftitle={Exact Net-Occurrence Counts in Purely Morphic Regular Epistandard Words},
  pdfauthor={Simone Faro and Arianna Pavone}
}
\usepackage[nameinlink,capitalise]{cleveref}

\newtheorem{theorem}{Theorem}[section]

\newaliascnt{lemma}{theorem}
\newtheorem{lemma}[lemma]{Lemma}
\aliascntresetthe{lemma}

\newaliascnt{corollary}{theorem}
\newtheorem{corollary}[corollary]{Corollary}
\aliascntresetthe{corollary}

\newaliascnt{proposition}{theorem}

\aliascntresetthe{proposition}

\theoremstyle{definition}
\newaliascnt{definition}{theorem}

\aliascntresetthe{definition}

\theoremstyle{remark}
\newaliascnt{remark}{theorem}
\newtheorem{remark}[remark]{Remark}
\aliascntresetthe{remark}

\newcommand{\NO}{\mathrm{NO}}
\newcommand{\Occ}{\operatorname{Occ}}
\newcommand{\Pal}{\operatorname{Pal}}

\newcommand{\eps}{\varepsilon}

\begin{document}

\title{Exact Net-Occurrence Counts in Purely Morphic Regular Epistandard Words}

\author{%
Simone Faro\\
\small Department of Mathematics and Computer Science, University of Catania\\
\small Viale Andrea Doria 6, 95125 Catania, Italy\\
\small \texttt{faro@dmi.unict.it}\\
\small ORCID: 0000-0001-5937-5796
\and
Arianna Pavone\\
\small Department of Mathematics and Computer Science, University of Palermo\\
\small Via Archirafi 34, 90123 Palermo, Italy\\
\small \texttt{ariannamaria.pavone@unipa.it}\\
\small ORCID: 0000-0002-8840-6157
}
\date{}

\maketitle

\begin{abstract}
Finite Fibonacci words have recently been shown to contain exactly three net occurrences---occurrences of repeated factors whose one-letter left and right extensions are unique. This unexpectedly small constant raises a natural question: is it peculiar to the Fibonacci recurrence, or part of a broader morphic phenomenon? We study a family over the alphabet $\{0,1,\ldots,d-1\}$ determined by positive integers $e_0,\ldots,e_{d-1}$. For each letter $a$, let $L_a$ be the morphism that fixes $a$ and maps every other letter $b$ to $ab$; we consider the finite approximants $S_m=\mu^m(0)$ generated by $\mu=L_0^{e_0}\cdots L_{d-1}^{e_{d-1}}$. These words are the period-aligned finite approximants of the purely morphic regular epistandard family considered here. We prove a sharp dichotomy: for every $m\ge2$, $S_m$ has exactly three net occurrences when $e_{d-1}=1$, and exactly two when $e_{d-1}\ge2$; the initial approximant is also completely classified. The proof combines palindromic prefixes, return-word factorizations, and overlapping net-occurrence covers. As a consequence, when all exponents are equal to one---the standard $d$-bonacci case---every noninitial period-aligned finite approximant has exactly three net occurrences, placing the Fibonacci phenomenon in a wider epistandard framework.
\end{abstract}

\noindent\textbf{Keywords:} net occurrences; epistandard words; return words; morphic words; palindromic closure.

\medskip
\noindent\textbf{2020 Mathematics Subject Classification:} Primary 68R15; Secondary 68Q45.

\section{Introduction}

Net occurrences provide a local way of identifying repeated substrings, or \emph{factors}, that are not explained by longer repeated contexts. An occurrence of a repeated factor is \emph{net} when both its one-character left extension and its one-character right extension are unique; occurrences touching the boundary of the finite word are handled by the natural boundary convention formalized in \cref{sec:word-basics}. Guo and Kishi recently proved that finite Fibonacci words, generated by iterating $0\mapsto01$ and $1\mapsto0$, have exactly three net occurrences, whereas finite Thue--Morse words, generated by iterating $0\mapsto01$ and $1\mapsto10$, have exactly nine \cite{GuoKishi2025}. Their analysis introduced the notion of an \emph{overlapping net occurrence cover} (ONOC), a left-to-right chain of overlapping net occurrences that covers the whole finite word and can be used to prove that a prescribed collection of net occurrences is complete. A subsequent morphism-based approach revisited these two families through occurrence-preserving morphisms and gave alternative proofs of their net-occurrence structure \cite{KishiEtAl2026}.

From a combinatorial viewpoint, the Fibonacci result raises a natural structural question: is the constant number of net occurrences peculiar to the binary Fibonacci recurrence, or does it persist in broader families of morphically generated words? A first extension is suggested by the $d$-bonacci family, the natural $d$-letter generalization of the Fibonacci construction. We use the standard $d$-bonacci substitution (letter-to-word morphism) $\sigma$ on $\{0,\ldots,d-1\}$, defined by
\(\sigma(i)=0(i+1)\quad(0\le i<d-1), \qquad \sigma(d-1)=0.\)
Its associated epistandard directive word is
\((01\cdots(d-1))^\omega,\)
as recalled for the $d$-bonacci family in \cite{Peltomaki2024}.
More broadly, an epistandard word is obtained from an infinite \emph{directive word}---an infinite sequence of letters prescribing the successive construction steps---by repeated palindromic closure, which extends the current prefix to the shortest possible palindrome; this construction goes back to the foundational development of episturmian words \cite{DroubayJustinPirillo2001} and is recalled formally in \cref{sec:epi-basics}. This framework contains the Fibonacci, Tribonacci, and $d$-bonacci words as particularly symmetric examples \cite{GlenJustin2009,Peltomaki2024}. It also gives explicit descriptions of palindromic prefixes, factors, and \emph{return words}, the blocks running from one occurrence of a factor up to (but not including) the next occurrence \cite{JustinVuillon2000,GlenJustin2009}.

The notion of net occurrence also has an algorithmic origin. Net frequency was introduced by Lin and Yu \cite{LinYu2001}, and recent work has developed offline \cite{GuoEtAl2024,OhlebuschEtAl2024}, online \cite{GuoEtAlOnline2024,MienoInenaga2025,Inenaga2026}, and compressed-space \cite{KimuraI2026} algorithms for computing net frequency and net occurrences. Structured morphic texts already enter this line of research: Fibonacci words were used in obtaining lower bounds for net-frequency computation \cite{GuoEtAl2024,GuoKishi2025}. This algorithmic background complements the combinatorial problem addressed here, namely the exact structure and number of net occurrences in a natural epistandard family.

In this paper we consider the following periodic family. Let
\(\Sigma_d=\{0,1,\ldots,d-1\},\qquad d\ge2,\)
and fix integers $e_0,e_1,\ldots,e_{d-1}\ge1$. A morphism on words is determined by its images on letters and is extended to arbitrary words by concatenation. For $a\in\Sigma_d$, let $L_a$ be the elementary pure epistandard morphism defined by
\(L_a(a)=a,\qquad L_a(b)=ab\quad(b\ne a).\)
For a finite directive word $v=x_1\cdots x_n$, we write
\(\mu_v=L_{x_1}\cdots L_{x_n}\)
for the corresponding composition, with composition taken from right to left. We study the morphism associated with one directive period
\(\mu=L_0^{e_0}L_1^{e_1}\cdots L_{d-1}^{e_{d-1}}\)
and its finite approximants
\(S_m=\mu^m(0),\qquad m\ge1.\)
We call $S_1$ the \emph{initial approximant} and $S_m$, $m\ge2$, the \emph{noninitial approximants}. All net-occurrence counts in this paper refer to these canonical finite approximants $S_m$; no claim is made about arbitrary finite factors of the associated infinite word. Since the same morphism $\mu$ is iterated at every step, the infinite limit is called \emph{purely morphic}. Equivalently, the associated epistandard word has periodic directive word
\(\Delta=\left(0^{e_0}1^{e_1}\cdots(d-1)^{e_{d-1}}\right)^\omega,\)
where $v^\omega$ denotes infinite repetition of the finite word $v$. After equal consecutive directive letters are grouped into runs, their positive lengths are traditionally called \emph{partial quotients}. Thus the run letters cycle through $0,1,\ldots,d-1$ and the corresponding run lengths repeat every $d$ runs; this is the regular cyclic setting used throughout the paper. Periodic directive words are a classical source of purely morphic epistandard words \cite{JustinPirillo2002,GlenJustin2009}, while Peltom\"aki developed the associated \emph{generalized standard words}, canonical finite words attached to the successive run boundaries, in detail \cite{Peltomaki2024}.

The recent occurrence-preserving framework of Kishi, Guo, Urbina, and Bannai \cite{KishiEtAl2026} provides an important comparison point, but it does not globally subsume the morphisms considered here. To see this directly, put $z=d-1$ and $\nu=L_0^{e_0}\cdots L_{z-1}^{e_{z-1}}$. Then
\(\mu(z)=\nu(z),\qquad \mu(0)=\nu(z)^{e_z}\nu(0).\)
Thus $z$ has no occurrence in the one-letter word $0$, whereas $\mu(z)$ occurs in $\mu(0)$. Each elementary morphism $L_a$ is injective \cite{JustinPirillo2002}, and hence so is $\mu$. Proposition~23 of Kishi et al. states that, for an injective morphism, interference-freeness on a factor $u$ implies preservation of the number of occurrences of $u$ after one morphic step \cite[Proposition~23]{KishiEtAl2026}. Taking $u=z$ and $v=0$, the strict increase from zero occurrences of $z$ in $0$ to at least one occurrence of $\mu(z)$ in $\mu(0)$ therefore shows that $\mu$ is not interference-free even on the singleton factor $z$. Their machinery may still apply to specially chosen factors, but it does not globally subsume the present setting; our classification instead relies on factor-specific localization inside the epistandard approximants, supplied by palindromic containers and return words.

The restriction that the same exponent vector $(e_0,\ldots,e_{d-1})$ repeats at every directive period is also structural rather than merely notational. It makes the same period morphism $\mu$ recur at every level and keeps the relevant return-word factorizations aligned with fixed run positions. If the partial quotients vary from one cycle to the next, the corresponding palindromic containers and finite derived words depend on the whole preceding exponent sequence, so the two- or three-occurrence ONOC geometry is no longer forced by a single terminal exponent. The present setting is therefore a natural homogeneous purely morphic subclass in which a uniform exact count can be sought.

Our main result is a sharp dichotomy. We write $\NO(T)$ for the set of net occurrences of a finite word $T$, and $\#X$ for the number of elements of a finite set $X$. For every noninitial approximant, the number of net occurrences depends only on the exponent of the last letter in one directive period:
\[
\#\NO(S_m)=
\begin{cases}
3,&e_{d-1}=1,\\
2,&e_{d-1}\ge2,
\end{cases}
\qquad m\ge2.
\]
We also determine the initial approximant completely. In full form,
\[
\#\NO(S_m)=
\begin{cases}
3,&e_{d-1}=1\text{ and }(m\ge2\text{ or }e_0\ge2),\\
2,&\text{otherwise}.
\end{cases}
\]
Hence the entire family exhibits a two-or-three dichotomy controlled, after the initial approximant, by a single partial quotient.

The proof combines three structural ingredients. First, generalized standard words and palindromic prefixes locate the factors that can support net occurrences. Second, the return-word theory of Justin and Vuillon determines the positions and immediate left/right contexts of these factors inside the finite approximants \cite{JustinVuillon2000}. Third, the resulting canonical net occurrences are organized into an ONOC. The completeness criterion of Guo and Kishi then reduces the exclusion of every additional net occurrence to the uniqueness of a small number of expanded overlaps \cite{GuoKishi2025}. All terminology used in this outline is defined in \cref{sec:preliminaries} before it enters the proofs.

The two branches of the dichotomy have different geometries. When $e_{d-1}\ge2$, two canonical boundary occurrences overlap and already form an ONOC. When $e_{d-1}=1$, the two boundary occurrences no longer suffice: a canonical palindromic factor gives rise to a third, central net occurrence. For alphabet size $d=2$, the epistandard setting is the classical standard Sturmian setting \cite{GlenJustin2009}; this binary case requires a slightly different factorization, but the same return-word mechanism yields the identical dichotomy.

When $e_0=\cdots=e_{d-1}=1$, the one-period morphism satisfies $\mu=\sigma^d$ for the $d$-bonacci substitution defined above. Hence our theorem yields exactly three net occurrences on the period-aligned subsequence $S_m=\sigma^{dm}(0)$ for $m\ge2$, that is, on every $d$th substitution iterate. This qualification is essential: the intermediate iterates $\sigma^{dm+r}(0)$, $1\le r<d$, stop inside a directive period and are not covered by the main theorem, so no claim about their net-occurrence counts is made here. In the binary case the covered subsequence is the even-indexed Fibonacci subsequence; the full Fibonacci statement is Theorem~34 of Guo and Kishi \cite{GuoKishi2025}.

\section{Preliminaries}
\label{sec:preliminaries}

\subsection{Words, occurrences, contexts, and periods}
\label{sec:word-basics}

Let $\Sigma$ be a finite alphabet and let $\Sigma^*$ denote the set of all finite words over $\Sigma$; $\eps$ denotes the empty word. For a finite word $w$, its length is $|w|$. For an integer $q\ge0$, the power $w^q$ is the concatenation of $q$ copies of $w$, while $w^\omega$ denotes the right-infinite periodic word $www\cdots$ when $w\ne\eps$.

A word $v$ is a \emph{factor} of $w$ if $w=xvy$ for some words $x,y$. It is a \emph{prefix} if $x=\eps$ and a \emph{suffix} if $y=\eps$; it is \emph{proper} if it is different from the whole word. We write $v\prec w$ when $v$ is a proper prefix of $w$. A \emph{palindrome} is a word that reads the same from left to right and from right to left. If $x$ is a prefix of $w$, the notation $x^{-1}w$ means that this prefix is deleted; if $x$ is a suffix of $w$, then $wx^{-1}$ means that the suffix is deleted. These are quotient notations for words, not group inverses.

For a finite word $T\in\Sigma^*$ we use one-based indexing. If $1\le i\le j\le|T|$, the interval $(i,j)$ denotes the occurrence of the factor $T[i..j]$. When a named factor $W$ starts at position $i$, we also write $W@i$, and we denote by
\[
\Occ_T(W)=\{i:1\le i\le |T|-|W|+1,\ T[i..i+|W|-1]=W\}
\]
the set of all starting positions of occurrences of $W$ in $T$. An occurrence is \emph{repeated} if the same factor occurs at a different position in $T$. An occurrence $(i',j')$ is a \emph{sub-occurrence} of $(i,j)$ if $i\le i'\le j'\le j$, and is proper when at least one inequality is strict; \emph{super-occurrence} is the reverse relation. A \emph{complete occurrence} means that the entire factor lies inside the finite word under consideration.

When both neighboring positions exist, the \emph{context} of an occurrence $(i,j)$ is the ordered pair $(T[i-1],T[j+1])$. We use the boundary symbol $\bot$ in place of a missing neighbor; for example, a prefix occurrence with right neighbor $b$ has context $(\bot,b)$.

Following Guo and Kishi \cite{GuoKishi2025}, an occurrence $(i,j)$ is a \emph{net occurrence} if $T[i..j]$ is repeated, whereas its one-character left and right extensions are unique. More precisely, $T[i-1..j]$ and $T[i..j+1]$ must each occur exactly once whenever they are defined. If $i=1$, the left extension is regarded as unique by convention; similarly, if $j=|T|$, the right extension is regarded as unique. We write $\NO(T)$ for the set of net occurrences of $T$.

We use the following elementary properties from \cite{GuoKishi2025}: a proper sub-occurrence of a net occurrence is not net, and a proper super-occurrence of a net occurrence is not net.

A positive integer $p\le |w|$ is a \emph{period} of $w$ if $w[i]=w[i+p]$ whenever both positions are defined. The smallest such $p$ is the \emph{least period}. A nonempty word is \emph{primitive} if it is not a proper power $v^q$ with $q\ge2$. For a letter $a$, $|w|_a$ denotes the number of occurrences of $a$ in $w$; the vector $(|w|_a)_{a\in\Sigma}$ is the \emph{Parikh vector} of $w$. For positive integers $p$ and $q$, $\gcd(p,q)$ denotes their greatest common divisor.

\subsection{Overlapping net occurrence covers}

Let $\mathcal C=\{(i_1,j_1),\ldots,(i_c,j_c)\}$ be a collection of net occurrences ordered from left to right. Following \cite{GuoKishi2025}, $\mathcal C$ is an \emph{overlapping net occurrence cover} (ONOC) of $T$ if
\(i_1=1,\qquad i_{k+1}\le j_k\ (1\le k<c),\qquad j_c=|T|.\)
For each $1\le k<c$, the overlap $(i_{k+1},j_k)$ is called a \emph{bridging net sub-occurrence} (BNSO); its one-character expansion is $(i_{k+1}-1,j_k+1)$.

We repeatedly use the following consequence of \cite[Lemma~12]{GuoKishi2025}: if every expanded BNSO of an ONOC occurs only once in $T$, then that ONOC already contains all net occurrences of $T$.

\subsection{Epistandard morphisms, directive words, and palindromic closure}
\label{sec:epi-basics}

For finite alphabets $\Sigma$ and $\Gamma$, a \emph{morphism} $h:\Sigma^*\to\Gamma^*$ satisfies $h(uv)=h(u)h(v)$ for all words $u,v$; hence it is determined by its images on individual letters. A morphism from $\Sigma^*$ to itself is also called a \emph{substitution}. For $a\in\Sigma$, the elementary pure epistandard morphism $L_a$ is defined by
\(L_a(a)=a,\qquad L_a(b)=ab\quad(b\ne a).\)
Compositions of the $L_a$ are called \emph{pure epistandard morphisms} \cite{JustinPirillo2002,GlenJustin2009}. Products denote composition from right to left:
\(L_{a_1}L_{a_2}(w)=L_{a_1}(L_{a_2}(w)).\)
For a finite word $v=x_1\cdots x_n$, we write
\(\mu_v=L_{x_1}\cdots L_{x_n}.\)

For later use, the \emph{incidence matrix} $M_h$ of a morphism $h:\Sigma^*\to\Sigma^*$ is the matrix whose entry $(a,b)$ is $|h(b)|_a$. A square integer matrix is \emph{unimodular} if its determinant is $+1$ or $-1$; equivalently, its inverse has integer entries.

A \emph{directive word} is an infinite sequence $\Delta=\xi_1\xi_2\cdots$ of letters. Set $u_1=\eps$ and recursively
\(u_{n+1}=(u_n\xi_n)^{(+)},\)
where $w^{(+)}$ is the shortest palindrome having $w$ as a prefix. This recursion defines the \emph{iterated palindromic-closure map} $\Pal$ by
\(u_{n+1}=\Pal(\xi_1\cdots\xi_n).\)
The words $u_n$ form an increasing sequence of palindromic prefixes; their infinite limit is the \emph{epistandard word directed by $\Delta$} \cite{GlenJustin2009}.

A \emph{directive cut} after a finite prefix $v$ of $\Delta$ is simply the split $\Delta=v\Delta'$. The associated morphism is $\mu_v$, and $\Delta'$ is the \emph{directive tail}. We use this language repeatedly when return words are read from a finite prefix of the directive sequence.

Throughout the paper,
\[
\Delta=\delta^\omega,\qquad \delta=0^{e_0}1^{e_1}\cdots(d-1)^{e_{d-1}},
\]
and
\[
\mu=L_0^{e_0}L_1^{e_1}\cdots L_{d-1}^{e_{d-1}},\qquad S_m=\mu^m(0).
\]

\subsection{Regular directive words and generalized standard words}

After equal consecutive letters of a directive word are grouped into maximal \emph{runs}, write a regular cyclic directive word in the form
\(\Delta=y_1^{a_1}y_2^{a_2}\cdots, \qquad r_k=a_1+\cdots+a_k,\)
where the run letters cycle as
\(y_1y_2\cdots=(01\cdots(d-1))^\omega.\)
The positive run lengths $a_k$ are the \emph{partial quotients}, and $r_k$ is the directive position at the end of the $k$th run. Define
\(\tau_k=L_{y_1}^{a_1}\cdots L_{y_k}^{a_k},\qquad s_k=\tau_k(y_{k+1}).\)
The words $s_k$ are the \emph{generalized standard words} associated with the directive runs. We use the structural results of Peltom\"aki \cite[Lemmas~3.15 and~3.19]{Peltomaki2024}: $u_{r_k}$ is a proper prefix of $s_k$, and if the suffix $t_k$ is defined by $u_{r_k+1}=s_kt_k$, then
\(u_{r_k+i}=s_k^it_k\qquad(1\le i\le a_{k+1}).\)
We also invoke Peltom\"aki's explicit product formulas and recurrence for these words; the relevant lemma is cited at each use.

\subsection{Return words, derived words, and return boundaries}

An infinite word is \emph{recurrent} if every one of its finite factors occurs infinitely many times. Let $v$ be a factor of a recurrent infinite word. A \emph{return word} to $v$ starts at one occurrence of $v$ and ends immediately before the next occurrence. The corresponding \emph{complete return} is obtained by adjoining that next occurrence of $v$, so it runs from the beginning of one occurrence of $v$ through the end of the next.

Choose one code letter for each distinct return word and let $\rho(a)$ denote the return word of type $a$. The morphism $\rho$ is the \emph{return-word morphism}. Reading the successive return types gives the \emph{derived word}. Positions at which consecutive return words meet are \emph{return boundaries}; the code letter of a return word is its \emph{return type}. For a finite concatenation $T=\rho(y_1\cdots y_q)$, the finite word $y_1\cdots y_q$ is the corresponding \emph{finite derived word}. This terminology lets us translate questions about positions and neighboring letters in $T$ into short patterns in the derived word.

Return words in epistandard words admit an explicit description due to Justin and Vuillon \cite[Theorem~3.3, Corollary~4.1, and Theorem~4.4]{JustinVuillon2000}. We call the shortest palindromic prefix containing a factor $v$ its \emph{palindromic container}. If
\(u_{n+1}=fvg\)
is this palindromic container, and $\mu_n=L_{\xi_1}\cdots L_{\xi_n}$, then the return words to $v$ are
\(f^{-1}\mu_n(x)f\)
for the letters $x$ occurring in the directive tail. We shall use this theorem together with finite truncations of the corresponding return-word factorization.

\section{Structural and Return-Word Tools}
\label{sec:tools}

This section isolates the two pieces that recur throughout the proof. The word $A_m$ is the repeated block created by the last directive run, whereas $C_m$ is the terminal factor that survives at the end of the approximant. Locating all occurrences of $C_m$ is the key step that later turns local return-word information into a complete count of net occurrences.

Put $z=d-1$ and $r=e_z$. Define
\(\nu=L_0^{e_0}\cdots L_{z-1}^{e_{z-1}},\qquad C=\nu(0),\qquad A=\nu(z),\)
and
\(C_m=\mu^{m-1}(C),\qquad A_m=\mu^{m-1}(A).\)

Before invoking return words, it is useful to make the repeated block structure of $S_m$ completely explicit. The next lemma shows that the whole approximant consists of $r$ copies of $A_m$ followed by $C_m$, and that $C_m$ is already nested at the beginning of $A_m$; this simple geometry drives all later overlap arguments.

\begin{lemma}[Structural decomposition]
\label{lem:structural}
For every $m\ge1$,
\(S_m=A_m^rC_m.\)
Moreover,
\(A=C0^{e_0-1}z,\)
so $C_m\prec A_m$ for every $m$.
\end{lemma}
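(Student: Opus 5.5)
The proof splits into three steps: reduce to $m=1$, compute $\nu(z)$ directly from the definitions of the $L_a$, and then deduce the prefix relation.

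\textbf{Reduction to $m=1$.} Since $\mu$ is a morphism and $S_m=\mu^{m-1}(S_1)$, it suffices to show $S_1=A^rC$. Applying $\mu^{m-1}$ then gives
\[
S_m=\mu^{m-1}(A)^r\mu^{m-1}(C)=A_m^rC_m .
\]
Write $\mu=\nu L_z^{r}$. Since $z\neq 0$, the definition $L_z(0)=z0$ gives $L_z^{r}(0)=z^r0$ by induction on $r$. Hence
\[
S_1=\mu(0)=\nu(z^r0)=\nu(z)^r\nu(0)=A^rC .
\]
This is the identity already noted in the introduction.

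\textbf{Computing $A=\nu(z)$.} Since $z$ differs from $0,\dots,z-1$, we compute $\nu(z)=L_0^{e_0}\cdots L_{z-1}^{e_{z-1}}(z)$ from the inside out. Put $w_j=L_j^{e_j}\cdots L_{z-1}^{e_{z-1}}(z)$ and $c_j=L_j^{e_j}\cdots L_{z-1}^{e_{z-1}}(0)$ for $0\le j\le z$, with $w_z=z$ and $c_z=0$. We claim by downward induction on $j$ that, for $1\le j\le z$, the words $w_j$ and $c_j$ share a common prefix of the form
\[
c_j=p_j0,\qquad w_j=p_jz,
\]
where $p_j$ contains no $0$. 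This holds at $j=z$ with $p_z=\eps$. For the inductive step at $j\ge1$, apply $L_j^{e_j}$, which prefixes $j^{e_j}$ to every letter other than $j$ and fixes $j$. This yields $L_j^{e_j}(p_j)\,j^{e_j}0$ and $L_j^{e_j}(p_j)\,j^{e_j}z$. So $p_{j-1}=L_j^{e_j}(p_j)j^{e_j}$ works, and it still contains no $0$.

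In the last step we apply $L_0^{e_0}$ to $c_1=p_10$ and $w_1=p_1z$. Put $P=L_0^{e_0}(p_1)$. Since $L_0$ fixes $0$ and $L_0^{e_0}(z)=0^{e_0}z$, we get
\[
C=P\,0,\qquad A=P\,0^{e_0}z=C\,0^{e_0-1}z .
\]

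\textbf{Prefix relation.} From $A=C0^{e_0-1}z$, the word $C$ is a proper prefix of $A$. Writing $A=Cx$ with $x=0^{e_0-1}z$ nonempty, we get $A_m=C_m\,\mu^{m-1}(x)$. Since $\mu$ is nonerasing, $\mu^{m-1}(x)$ is nonempty, and therefore $C_m\prec A_m$.

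The only real care needed is the bookkeeping in the induction. One must check that the letter $0$ stays last and that the prefixes of $c_j$ and $w_j$ agree. In particular, $0\ne j$ for $j\ge1$ is exactly what makes $L_j$ treat $0$ and $z$ identically. The case $z=1$ (that is, $d=2$) is the degenerate instance with $p_1=\eps$.
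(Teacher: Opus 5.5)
Your proof is correct and follows the paper's route. The paper also writes $\eta(0)=P0$, $\eta(z)=Pz$ for $\eta=L_1^{e_1}\cdots L_{z-1}^{e_{z-1}}$, applies $L_0^{e_0}$, and uses $L_z^r(0)=z^r0$ before applying $\mu^{m-1}$; your downward induction simply proves the common-prefix claim that the paper asserts without proof. One indexing fix: by your definitions, $c_{j-1},w_{j-1}$ are obtained from $c_j,w_j$ by applying $L_{j-1}^{e_{j-1}}$ (for $j\ge2$), not $L_j^{e_j}$, since $c_j$ and $w_j$ already contain that factor.
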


\begin{proof}
Let $\eta=L_1^{e_1}\cdots L_{z-1}^{e_{z-1}}$ (the identity if $z=1$). There exists a word $P$ such that $\eta(0)=P0$ and $\eta(z)=Pz$. Applying $L_0^{e_0}$ gives
\(C=L_0^{e_0}(P)0, \qquad A=L_0^{e_0}(P)0^{e_0}z=C0^{e_0-1}z.\)
Since $L_z^r(0)=z^r0$,
\(\mu(0)=\nu(z)^r\nu(0)=A^rC,\)
and applying $\mu^{m-1}$ proves the claim.
\end{proof}

At the directive cut immediately before the final run $z^r$ in the $m$th period, set
\(\lambda_m=\mu^{m-1}\nu.\)
Then
\(A_m=\lambda_m(z),\qquad C_m=\lambda_m(0),\qquad S_m=\lambda_m(z^r0).\)
The return-word description at this cut, together with the generalized-standard-word localization of $C_m$, gives the following occurrence statement. Intuitively, the lemma says that there are no hidden copies of $C_m$ inside an $A_m$ block: every occurrence sits exactly at one of the visible block boundaries. This exact localization is what later lets us identify the boundary net occurrences and their overlaps, including the borderline case $r=2$.

\begin{lemma}[Occurrences of the terminal factor]
\label{lem:Cocc}
Assume $d\ge3$. For every $m\ge1$,
\[
\Occ_{S_m}(C_m)=\{1,|A_m|+1,2|A_m|+1,\ldots,r|A_m|+1\}.
\]
\end{lemma}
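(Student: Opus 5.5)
The inclusion $\supseteq$ follows directly from \cref{lem:structural}. Since $S_m=A_m^rC_m$ and $C_m\prec A_m$, the word $C_m$ occurs at every block boundary $j|A_m|+1$ for $0\le j\le r$. The last of these, $j=r$, is the terminal copy of $C_m$. All the work lies in the inclusion $\subseteq$, namely that no copy of $C_m$ starts strictly inside an $A_m$ block or straddles two blocks. I would obtain this from the Justin--Vuillon description of return words at the directive cut $\lambda_m=\mu^{m-1}\nu$. This cut sits just before the final run $z^r$ of the $m$th period, so the directive tail is $z^r0^{e_0}1^{e_1}\cdots$.

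The first step is to identify $C_m$ and its palindromic container in the language of generalized standard words. Write the cut as the end of run $k$, so that $\lambda_m=\tau_k$ and $y_{k+1}=z$. Then $A_m=\tau_k(z)=s_k$. Also $C_m=\tau_k(0)$, and $0=y_k+1-d$ modulo the cycle, so $C_m$ is the generalized standard word $s_{k-(d-1)}$ transported by the later runs. Using Peltom\"aki's lemmas ($u_{r_k}\prec s_k$ and $u_{r_k+i}=s_k^it_k$), I would show that $u_{r_k}$ is a prefix of $C_m$. I would then show that the palindromic container of $C_m$ is $u_{r_k+1}=s_kt_k$, written $u_{r_k+1}=fC_mg$ with $f=\eps$. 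The point is that $C_m$ is a prefix of $s_k$, so the container is the first palindromic prefix longer than $|C_m|$. This is where $d\ge3$ enters: for $d=2$ the letter $0$ is adjacent to $z$ in the cycle, and the container shifts.

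With $f=\eps$, \cite[Theorem~3.3, Corollary~4.1]{JustinVuillon2000} says that the return words to $C_m$ are exactly the words $\tau_k(x)$ for letters $x$ occurring in the tail. It also says that the occurrences of $C_m$ in the infinite epistandard word are precisely the boundaries of the factorization $\tau_k(z^r0^{e_0}\cdots)$. Since $S_m=\tau_k(z^r0)$ is a prefix of this factorization, every occurrence of $C_m$ lying completely inside $S_m$ starts at one of the boundaries $1,|A_m|+1,\ldots,r|A_m|+1$. The next boundary after $r|A_m|+1$ is $r|A_m|+|C_m|+1>|S_m|-|C_m|+1$, so it falls outside the range of admissible starting positions. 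This gives $\subseteq$.

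The main obstacle is the container identification: proving that $f=\eps$, and that no shorter palindromic prefix already contains $C_m$ at an offset that would conjugate the return words. If Peltom\"aki's formulas do not give this cleanly, I would fall back on a direct induction on $m$ via desubstitution. Each $L_a$ is synchronizing, in that every occurrence of $L_a(u)$ in $L_a(w)$ with $u$ beginning with $a$ comes from an occurrence of $u$ in $w$. Peeling off the letters of $\lambda_m$ one at a time would then reduce the claim to the base pattern of $0$ inside $z^r0$.
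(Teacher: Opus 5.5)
Your skeleton is the paper's own route: Justin--Vuillon at the cut $\lambda_m=\tau_k$, container $u_{r_k+1}$ with $f=\eps$, and finite derived word $z^r0$. The inclusion $\supseteq$ and the final reading-off of boundaries are fine. But the step that carries the whole lemma is only announced, and the tools you name for it do not suffice. What has to be proved is the strict inequality $|u_{r_k}|<|C_m|$, i.e.\ that $C_m$ is not already contained in $u_{r_k}$. The container is the first palindromic prefix of length \emph{at least} $|C_m|$, so equality must also be excluded.

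The facts $u_{r_k}\prec s_k$ and $u_{r_k+i}=s_k^it_k$, together with $C_m\prec A_m=s_k$, only give $C_m\prec u_{r_k+1}$, which is an upper bound on the container. Since $u_{r_k}$ and $C_m$ are both prefixes of $s_k$, nothing there compares their lengths. Without the lower bound the container could be an earlier palindromic prefix, the return-word factorization would be read at a different cut, and the block-boundary description would not follow. For $d=2$ the lemma is in fact false: with $e_0=e_1=2$ and $m=1$, $S_1=0010010$ and $C_1=0$ occurs at $1,2,4,5,7$. So this inequality is exactly where $d\ge3$ must be used, and your remark about $0$ being adjacent to $z$ is only a heuristic for it. (Incidentally, $y_k+1-d\equiv z$, not $0$; the relation you want is $0=y_{k+2}=y_{k-d+2}$.)

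The paper fills the gap as follows. For $m\ge2$, Peltom\"aki's image formula gives $C_m=s_{k-1}^{a_k}R_k$ with $R_k=s_{k-2}^{a_{k-1}}\cdots s_{k-d+1}$, while $u_{r_k}=s_{k-1}^{a_k}t_{k-1}$. Cancelling against the regular recurrence for $s_{k-1}$ yields $t_{k-1}=u_{r_{k-d}}$. Hence $|t_{k-1}|<|s_{k-d}|<|s_{k-2}|\le|R_k|$, and this chain needs $d\ge3$. The case $m=1$ requires a separate first-cycle length computation.

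Your fallback does not repair this. The synchronization property as stated is false. $L_a(u)$ begins with $a$ for every nonempty $u$, and the actual obstruction is a \emph{final} letter $a$ of $u$, because the image $L_a(a)=a$ can be the first letter of any $L_a(b)=ab$; already $L_0(0)=0$ occurs in $L_0(1)=01$. This is precisely the case you would face. Pure epistandard morphisms preserve the last letter, so the preimage of $C_m$ under the first peeled factor $L_0$ ends with $0$. For instance, with $d=3$ and $e_0=e_1=e_2=1$, $C=010=L_0(10)$ occurs at position $1$ of $S_1=0102010=L_0(1210)$, although $10$ does not occur at position $1$ of $1210$. Moreover, a letter-by-letter desubstitution that never invokes $d\ge3$ would equally ``prove'' the $d=2$ statement, which is false. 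A correct version would have to track how the factor changes under each peeling and use $d\ge3$ at some definite step.
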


\begin{proof}
Let $K=md-1$ be the run index of the cut immediately before the final run $z^r$ in the $m$th period. Thus $\lambda_m=\tau_K$, $A_m=s_K$, and $C_m=\tau_K(0)$. We first identify the shortest palindromic prefix that contains $C_m$.

Assume first that $m\ge2$. Applying Peltom\"aki's explicit image formula for regular directive words \cite[Lemma~5.9(v)]{Peltomaki2024} at the last completed run gives
\[
C_m=s_{K-1}^{a_K}R_K,
\qquad
R_K=s_{K-2}^{a_{K-1}}\cdots
s_{K-d+2}^{a_{K-d+3}}s_{K-d+1},
\tag{3.1a}
\]
Here and below, products over an empty index range are simply omitted.

By \cite[Lemma~3.19]{Peltomaki2024},
\[
u_{r_K}=s_{K-1}^{a_K}t_{K-1}.
\tag{3.1b}
\]
In the cyclic sequence of directive runs, the previous occurrence of the directive letter at run $K-1$ is at run $K-d$. We spell out the short cancellation that identifies the suffix $t_{K-1}$. Peltom\"aki's Eqs.~(5) and~(7) imply, for every run index $j\ge1$,
\[
u_{r_j}=s_{j-1}^{a_j-1}s_{j-2}^{a_{j-1}}\cdots s_0^{a_1}.
\]
Indeed, Eq.~(7) gives the same product with one additional leading copy of $s_{j-1}$ for $u_{r_j+1}$, and Eq.~(5) states that $u_{r_j+1}=s_{j-1}u_{r_j}$. On the other hand, the regular recurrence \cite[Eq.~(13)]{Peltomaki2024} gives
\[
s_{K-1}=s_{K-2}^{a_{K-1}}\cdots s_{K-d}^{a_{K-d+1}}s_{K-d-1}.
\]
Expanding $u_{r_{K-1}+1}=s_{K-1}t_{K-1}$ by Eq.~(7), substituting this expression for $s_{K-1}$, and cancelling the common prefix leaves
\[
t_{K-1}=s_{K-d-1}^{a_{K-d}-1}s_{K-d-2}^{a_{K-d-1}}\cdots s_0^{a_1}=u_{r_{K-d}}.
\tag{3.1c}
\]
By \cite[Lemma~3.15]{Peltomaki2024}, $u_{r_{K-d}}$ is a proper prefix of $s_{K-d}$. The generalized-standard-word lengths are strictly increasing by the positive length recurrences in \cite[Eqs.~(10)--(11)]{Peltomaki2024}; since $d\ge3$, we therefore have
\(|t_{K-1}|<|s_{K-d}|<|s_{K-2}|\le |R_K|.\)
Together with (3.1a)--(3.1b), this proves
\(|u_{r_K}|<|C_m|.\)
Since $C_m\prec A_m=s_K$ by \cref{lem:structural}, while Lemma~3.19 gives $u_{r_K+1}=s_Kt_K$, we also have
\(C_m\prec u_{r_K+1}.\)
Consequently $u_{r_K+1}$ is the first palindromic prefix containing $C_m$.

It remains to justify the same localization when $m=1$. Here $K=d-1$. Peltom\"aki's first-cycle image formula \cite[Lemma~5.8]{Peltomaki2024} gives
\[
C_1=s_{K-1}^{a_K}s_{K-2}^{a_{K-1}}\cdots s_1^{a_2}s_0.
\]
Writing $q_i=|s_i|$, the length formula used in the proof of \cite[Lemma~3.15]{Peltomaki2024} gives
\[
|u_{r_K}|=(a_K-1)q_{K-1}+a_{K-1}q_{K-2}+\cdots+a_1q_0.
\]
Therefore
\(|C_1|-|u_{r_K}|=q_{K-1}-(a_1-1)q_0>0,\)
since $q_0=1$ and $q_{K-1}\ge a_1+1$. Moreover, Peltom\"aki's product formula for palindromic prefixes \cite[Eq.~(7)]{Peltomaki2024} gives
\[
u_{r_K+1}=s_{K-1}^{a_K}\cdots s_0^{a_1}.
\]
The first-cycle recurrence for the generalized standard words \cite[Lemma~3.3]{Peltomaki2024} and \cref{lem:structural} identify this word as
\(u_{r_K+1}=s_Kz^{-1}=C_1 0^{e_0-1},\)
where the notation simply means that the final $z$ of $s_K=C_1 0^{e_0-1}z$ is deleted. Thus again $u_{r_K+1}$ is the first palindromic prefix containing $C_1$.

Let $Q_m=u_{r_K+1}$. In both cases $C_m$ is a prefix of $Q_m$. By Justin--Vuillon's Theorem~3.3, the occurrences of $C_m$ are in bijection with the occurrences of its minimal palindromic container $Q_m$ \cite[Theorem~3.3]{JustinVuillon2000}. Writing $Q_m=fC_mg$, here $f=\eps$ because $C_m$ is a prefix of $Q_m$; Corollary~4.1 therefore transfers return words from $Q_m$ to $C_m$ without a conjugating prefix, and Theorem~4.4 identifies them as $\lambda_m(x)$ for the letters $x$ in the directive tail \cite[Corollary~4.1 and Theorem~4.4]{JustinVuillon2000}. Since
\(S_m=\lambda_m(z^r0),\)
the finite derived word is $z^r0$. Hence the complete occurrences of $C_m$ in $S_m$ start exactly at
\(1,\ |A_m|+1,\ 2|A_m|+1,\ldots,r|A_m|+1.\)
There are no intermediate occurrences because every occurrence of $C_m$ corresponds to one of these return boundaries. The restriction $d\ge3$ is essential: in the binary case the above localization changes, and that case is handled separately below.
\end{proof}

We next record the more delicate localization used when $r=1$ and $d\ge3$. In this branch the periodic block argument is no longer sufficient, so we must know exactly which palindromic prefix controls the occurrences of the shorter factor $Z_m$. The next lemma identifies that container and isolates the only exceptional ternary configuration. Let $w_m$ be $\delta^m$ with its final letters $z-1,z$ removed, let $\lambda'_m$ be its directive morphism, and put
\(Z_m=\lambda'_m(0),\qquad U_m=\Pal(w_m).\)

\begin{lemma}[Localization]
\label{lem:localization}
Let $m\ge2$, $d\ge3$, and $e_z=1$. If $d\ge4$ or $e_{z-1}\ge2$, then $U_m$ is the shortest palindromic prefix containing $Z_m$. If $d=3$ and $e_1=1$, the shortest palindromic prefix containing $Z_m$ is $\Pal(\delta^{m-1})$.
\end{lemma}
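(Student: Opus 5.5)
The plan is to reuse the length-comparison strategy from the proof of \cref{lem:Cocc}, now at the directive cut $w_m$ instead of the cut $\delta^{m-1}0^{e_0}\cdots(z-1)^{e_{z-1}}$. Write $n=|w_m|$, so that $U_m=u_{n+1}$. I will rely on two standard epistandard facts.

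\begin{itemize}[leftmargin=*]
\item Justin's formula $\Pal(vx)=\mu_v(x)\Pal(v)$. Together with $\lambda'_m=\mu_{w_m}$, it gives $\Pal(w_m0)=Z_mU_m$.
\item The characterization of the palindromic container used before. If $v$ is a prefix of the epistandard word with $|u_{k}|<|v|\le|u_{k+1}|$, then $u_{k+1}$ is the shortest palindromic prefix containing $v$, since every shorter palindromic prefix is too short.
\end{itemize}

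So it suffices to show that $Z_m$ is a prefix of $U_m$ and that $|\Pal(w_m^-)|<|Z_m|\le|U_m|$, where $w_m^-$ is $w_m$ with its last letter deleted.

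\textbf{First step: locate the end of $w_m$ in the run structure.} Since $e_z=1$, removing the final $z$ deletes the whole last run. Removing one further $z-1$ leaves a partial run $(z-1)^{e_{z-1}-1}$ when $e_{z-1}\ge2$. When $e_{z-1}=1$, the run of $z-1$ disappears and $w_m$ ends with the complete run $(z-2)^{e_{z-2}}$.

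\textbf{Case $e_{z-1}\ge2$.} The cut lies strictly inside run $K-1$, with $K=md-1$. Peltom\"aki's formula \cite[Lemma~3.19]{Peltomaki2024} gives $u_{r_{K-2}+i}=s_{K-2}^it_{K-2}$ for the relevant $i=e_{z-1}-1$ and $i-1$. I will expand $Z_m=\lambda'_m(0)$ with the same image formula as in (3.1a), now applied at this intermediate cut. The expected shape is $Z_m=s_{K-2}^{\,e_{z-1}-1}R'$, where $R'$ contains a full copy of some $s_j$ longer than $t_{K-2}$. For the suffix $t_{K-2}$ I will use (3.1c), namely $t_{K-2}=u_{r_{K-d-1}}$, together with the strict growth of the $|s_j|$. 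This gives $|u_n|<|Z_m|$. The bound $|Z_m|\le|U_m|$ and the prefix property should follow from the product formulas \cite[Eq.~(7)]{Peltomaki2024}.

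\textbf{Case $e_{z-1}=1$ and $d\ge4$.} Here the cut is at the end of run $K-2$, whose letter is $z-2\ne0$. This is exactly the situation of \cref{lem:Cocc} with the run index shifted down by one: $Z_m$ plays the role of $C_m$ and $U_m=u_{r_{K-2}+1}$. I will repeat that computation verbatim, $d\ge3$ being replaced by the condition that the previous occurrence of the letter $z-2$ is $d$ runs earlier, so that $|t|<|s_{K-d-1}|<|s_{K-3}|$. The case $d\ge4$ is needed so that the last letter of $w_m$ is not $0$.

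\textbf{Exceptional case $d=3$, $e_1=1$.} Now $w_m=\delta^{m-1}0^{e_0}$. Because $L_0(0)=0$, we get $Z_m=\mu_{\delta^{m-1}}L_0^{e_0}(0)=\mu^{m-1}(0)=S_{m-1}$. Thus $Z_m$ coincides with the image of $0$ at the cut after $\delta^{m-1}$. Applying the same comparison at that cut, the last run is $z^{e_z}=2^1$ and the preceding run is complete. This shows that $Z_m$ is a prefix of $\Pal(\delta^{m-1})$ and longer than the preceding palindromic prefix, so the container is $\Pal(\delta^{m-1})$ and not $U_m$.

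\textbf{Expected obstacle.} I expect the main work to be the case $e_{z-1}\ge2$. The cut falls inside a run, so I must check that Peltom\"aki's image formula for $\lambda'_m(0)$ at a mid-run cut really takes the claimed shape. I must also check that its excess over $|u_n|$ is a positive length difference of generalized standard words, which is where the inequality $|t_{K-2}|<|s_{K-3}|$ has to be verified from (3.1c).
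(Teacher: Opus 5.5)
Your plan is essentially the paper's proof. It uses the same three-way split (a mid-run cut for $e_{z-1}\ge2$; an end-of-run cut with last letter $z-2\ne0$ for $d\ge4$; and the case $d=3$, $e_1=1$, where $L_0(0)=0$ gives $Z_m=\mu^{m-1}(0)=s_{h-1}$, so \cite[Lemmas~3.15 and~3.19]{Peltomaki2024} put its container at $\Pal(\delta^{m-1})$), and the same sandwich of $|Z_m|$ between consecutive palindromic prefixes via Peltom\"aki's product formulas.

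Your run indices are off by one, though. With $K=md-1$ the $(z-1)$-run is run $K$ itself, not run $K-1$. So for $e_{z-1}\ge2$ you get $Z_m=s_{K-1}^{\ell}\tau_{K-1}(0)$ at once from $\lambda'_m=\tau_{K-1}L_{z-1}^{\ell}$, so your anticipated obstacle disappears. The bracketing prefixes are then $s_{K-1}^{\ell}t_{K-1}$ and $s_{K-1}^{\ell+1}t_{K-1}=U_m$, with $\ell=e_{z-1}-1$ and $t_{K-1}=u_{r_{K-d}}$ exactly as in (3.1c), rather than your $i$ and $i-1$ at index $K-2$. For $e_{z-1}=1$ the cut ends run $K-1$, and $Z_m=\tau_{K-1}(0)=s_{K-2}^{a_{K-1}}\cdots s_{K-d+2}^{a_{K-d+3}}s_{K-d+1}$ has one block type fewer than a literal shift of (3.1a). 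This is harmless because for $d\ge4$ its tail after $s_{K-2}^{a_{K-1}}$ still contains $s_{K-3}$, which is longer than $t_{K-2}=u_{r_{K-d-1}}$.
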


\begin{proof}
Use the run notation $r_k,s_k,t_k$ and put $k=md-2$. Set $R_k=\tau_k(0)$. Peltom\"aki's explicit image formula for regular epistandard words \cite[Lemma~5.9(v)]{Peltomaki2024} gives
\[
R_k=
 s_{k-1}^{a_k}s_{k-2}^{a_{k-1}}\cdots
 s_{k-d+3}^{a_{k-d+4}}s_{k-d+2},
\tag{3.1}
\]
With the empty-range convention stated above, comparing this expression with Peltom\"aki's recurrence for the generalized standard words $s_k$ \cite[Lemma~3.3]{Peltomaki2024} gives
\[
R_k\prec s_k.
\tag{3.2}
\]
On the other hand, the same cancellation used above makes the suffix in $u_{r_k+1}=s_kt_k$ explicit. Equation~(7) expands $u_{r_k+1}$ as
\[
s_{k-1}^{a_k}s_{k-2}^{a_{k-1}}\cdots s_0^{a_1},
\]
whereas the regular recurrence \cite[Eq.~(13)]{Peltomaki2024} expands
\[
s_k=s_{k-1}^{a_k}\cdots s_{k-d+1}^{a_{k-d+2}}s_{k-d}.
\]
After cancelling this common prefix, the remaining suffix is
\[
s_{k-d}^{a_{k-d+1}-1}s_{k-d-1}^{a_{k-d}}\cdots s_0^{a_1},
\]
which is exactly $u_{r_{k-d+1}}$ by Eqs.~(5) and~(7). Hence
\[
t_k=u_{r_{k-d+1}}.
\tag{3.3}
\]
Since $u_{r_j}$ is a proper prefix of $s_j$, (3.1)--(3.3) imply
\[
|t_k|<|R_k|<|s_k|.
\tag{3.4}
\]

If $e_{z-1}\ge2$, put $\ell=e_{z-1}-1$. Then
\[
Z_m=s_k^\ell R_k,
\qquad
u_{r_k+\ell}=s_k^\ell t_k,
\qquad
u_{r_k+\ell+1}=s_k^{\ell+1}t_k.
\]
By (3.4),
\(|u_{r_k+\ell}|<|Z_m|<|u_{r_k+\ell+1}|,\)
so $u_{r_k+\ell+1}=U_m$ is the shortest palindromic prefix containing $Z_m$.

If $e_{z-1}=1$ and $d\ge4$, then $Z_m=R_k$. Here the length subtraction can also be made explicit. Write $q_i=|s_i|$. From (3.1),
\[
|R_k|=a_kq_{k-1}+a_{k-1}q_{k-2}+\cdots+a_{k-d+4}q_{k-d+3}+q_{k-d+2}.
\]
Using Eqs.~(5) and~(7) as above, the run-end central word satisfies
\[
\begin{aligned}
|u_{r_k}|={}&(a_k-1)q_{k-1}+a_{k-1}q_{k-2}+\cdots+a_{k-d+4}q_{k-d+3}\\
&+q_{k-d+2}+|u_{r_{k-d+3}}|.
\end{aligned}
\]
All intermediate terms therefore cancel, leaving
\[
|R_k|-|u_{r_k}|=|s_{k-1}|-|u_{r_{k-d+3}}|.
\tag{3.5}
\]
The proper-prefix relation $u_{r_j}\prec s_j$ yields a positive right-hand side, hence
\(|u_{r_k}|<|Z_m|.\)
Together with $Z_m=R_k\prec s_k$ and $u_{r_k+1}=s_kt_k$, this gives
\(|u_{r_k}|<|Z_m|<|u_{r_k+1}|,\)
so $u_{r_k+1}=U_m$ is the required container.

Finally, if $d=3$ and $e_1=1$, the directive prefix ends in the next $0$-run. Writing $h=3m-2$, the same explicit image formula \cite[Lemma~5.9(v)]{Peltomaki2024} gives $Z_m=s_{h-1}$. By the generalized-standard-word relations \cite[Lemmas~3.15 and~3.19]{Peltomaki2024}, we have $u_{r_{h-1}}\prec Z_m$, while $Z_m$ is a prefix of $u_{r_{h-1}+1}$. Moreover, $u_{r_{h-1}+1}=\Pal(\delta^{m-1})$. Hence this is the shortest palindromic prefix containing $Z_m$.
\end{proof}

The following elementary lemma is the dictionary that turns a short derived word into information about the original finite word. It tells us both where the corresponding palindromic prefix occurs and which letters lie immediately to its left and right, so later context checks can be reduced to inspecting a few adjacent return types.

\begin{lemma}[Finite return-boundary and context lemma]
\label{lem:return}
Let $u$ be a palindromic prefix of an epistandard word, with return words $\rho(a)$. Let $y=y_1\cdots y_q$ be a finite factor of the derived word and $T=\rho(y)$. If $T$ starts at a return boundary of $u$, then the internal return boundaries are precisely the starts
\(1+|\rho(y_1\cdots y_i)|\qquad(0\le i<q)\)
of the corresponding occurrences of $u$ in $T$. At the internal boundary between return types $y_i$ and $y_{i+1}$, the occurrence of $u$ has context $(y_i,y_{i+1})$.
\end{lemma}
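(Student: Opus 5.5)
The statement is essentially a transcription of the return-word structure. I would read it directly off the definitions of return boundaries and of $\rho$, using only that consecutive occurrences of $u$ are separated by exactly one return word.

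\emph{Positions of the occurrences.} Since $T$ starts at a return boundary, position $1$ of $T$ is the start of an occurrence of $u$. By definition, $\rho(y_1)$ runs from this occurrence up to, but not including, the next occurrence. So the next occurrence of $u$ starts at $1+|\rho(y_1)|$. Inducting on $i$, the occurrence following the one at $1+|\rho(y_1\cdots y_{i-1})|$ starts at $1+|\rho(y_1\cdots y_i)|$. This gives the listed starts for $0\le i<q$.

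\emph{No other occurrences.} An occurrence of $u$ starting strictly inside some $\rho(y_{i+1})$ would contradict the definition of a return word. A return word ends immediately before the \emph{next} occurrence, so no occurrence can start inside it. Hence these are precisely the return boundaries inside $T$.

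\emph{Contexts.} I would use that every return word to the palindromic prefix $u$ has the form $\rho(a)=\mu_n(a)$, with $f=\eps$ in the Justin--Vuillon description. I would also use that $\mu_n(a)$ begins with $u$ followed by the letter determined by $a$. For the right context, the letter after the occurrence of $u$ at the start of $\rho(y_{i+1})$ is identified with the type $y_{i+1}$. For the left context, the letter before it is the last letter of $\rho(y_i)$. Symmetrically, since $u$ is a palindrome, the complete return $\rho(y_i)u$ ends with $y_i u$ after the identification, so this last letter is identified with $y_i$.

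\emph{Main obstacle.} The hardest step is making precise the identification of return types with boundary letters. This is the convention implicit in the phrase ``has context $(y_i,y_{i+1})$''. I would justify it by the bijection of Justin--Vuillon Theorem~3.3 between occurrences of $u$ and occurrences of letters in the derived sequence. Under that bijection the extensions $xu$ and $ux$ correspond to the letters $\mu_n$-coded by $x$, and these are distinct for distinct $x$. Once this is granted, the statement follows by the bookkeeping above.
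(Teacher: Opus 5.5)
Your treatment of the positions and of the absence of further occurrences matches the paper: both simply unfold the definition of return words. The context step, however, has a gap.

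\textbf{The return word versus the complete return.} $\rho(a)=\mu_n(a)$ need not begin with $ua$, or even with $u$, because return words can be shorter than $u$. For the Fibonacci directive word, $u=\Pal(01)=010$ has the return word $\mu_{01}(1)=01$. So the right context must be read from the complete return $\mu_n(a)u$, not from $\mu_n(a)$.

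\textbf{The left context.} More seriously, the palindromicity of $u$ does not imply that the complete return $\rho(y_i)u$ ends with $y_iu$. Knowing only that a word begins with $ua$ and ends with the palindrome $u$ says nothing about the letter preceding that final $u$. What is needed is that the complete return is \emph{itself} a palindrome, which is exactly the fact the paper uses. By Justin's formula, $\mu_n(a)u_{n+1}=\Pal(\xi_1\cdots\xi_na)=(u_{n+1}a)^{(+)}$. Hence it begins with $u_{n+1}a$, and therefore ends with $au_{n+1}$.

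\textbf{The ``identification'' obstacle.} Once this fact is in place, the obstacle you flag disappears. With the coding $\rho(x)=\mu_n(x)$, the return type is literally the letter on each side of the boundary occurrence of $u$. No appeal to the occurrence bijection of Justin--Vuillon's Theorem~3.3 is needed. That bijection, together with the distinctness of the extensions for distinct $x$, would not by itself tell you which letter precedes or follows a given occurrence.

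\textbf{An elementary alternative.} For the left context alone, note that every $L_b$ sends each letter $c$ to a word ending in $c$, so $\mu_n(a)$ ends with $a$. The right context still requires that $ua$ is a prefix of $\mu_n(a)u$, which again comes from Justin's formula.
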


\begin{proof}
Each return word begins at one occurrence of $u$ and ends immediately before the next. For a palindromic prefix, the complete return coded by $a$ is a palindrome beginning with $ua$ and therefore ending with $au$; this is the specialization to a palindromic prefix of the complete-return description of Justin and Vuillon \cite[Theorem~4.4]{JustinVuillon2000}. Hence the occurrence at the boundary between return types $a$ and $b$ is preceded by $a$ and followed by $b$.
\end{proof}

\begin{remark}[Arbitrary factors]
\label{rem:factor-context}
If $v$ is not itself palindromic and $u=fvg$ is its shortest palindromic container, Justin--Vuillon's return-word correspondence \cite[Corollary~4.1]{JustinVuillon2000}, in the form $f^{-1}\rho(a)f$, shows that the left context varies with the return type exactly when $f=\eps$, and the right context varies with the return type exactly when $g=\eps$. Otherwise the corresponding context is fixed inside $u$.
\end{remark}

\section{The Case $e_{d-1}\ge2$}
\label{sec:rge2}

When the final directive letter is repeated at least twice, $S_m$ contains at least two consecutive copies of the same block $A_m$. This creates a strong periodic structure: two long boundary occurrences will overlap, and their unique expanded overlap will rule out every further net occurrence.

Assume $r=e_z\ge2$ and put $a_m=|A_m|$. Define
\(B_m=A_m^{r-1}C_m.\)

The first step is to rule out hidden periodicity inside the repeated block $A_m$. If $A_m$ were itself a proper power, smaller shifts could imitate the visible $A_m$-block structure and the later occurrence argument would no longer be rigid.

\begin{lemma}[Primitivity of the repeated block]
\label{lem:primitive}
The word $A_m$ is primitive for every $m\ge1$.
\end{lemma}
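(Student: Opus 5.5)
We show that $A_m$ is primitive by reading off its Parikh vector and using the unimodularity of the incidence matrices. The tool is this observation: if $w=v^q$ with $q\ge2$, then every entry of the Parikh vector of $w$ is divisible by $q$. So it suffices to show that the Parikh vector of $A_m$ has coprime entries.

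First compute the Parikh vector. Since $A_m=\mu^{m-1}\nu(z)=\lambda_m(z)$, we have
\[
(|A_m|_a)_{a}=M_{\lambda_m}\,e_z ,
\]
where $e_z$ is the standard basis vector of $z$. This is the $z$-th column of $M_{\lambda_m}=M_\mu^{m-1}M_\nu$. Each elementary morphism $L_a$ has an incidence matrix equal to the identity plus a row of ones in row $a$ off the diagonal, that is, $I+E$ with $E$ supported on row $a$ and $E_{aa}=0$. Its determinant is therefore $1$. Multiplicativity of incidence matrices under composition, $M_{gh}=M_gM_h$, then shows that $M_{\lambda_m}$ is an integer matrix of determinant $1$, hence unimodular.

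The key step is that every column of a unimodular matrix is primitive. Expanding the determinant along that column shows that the gcd of its entries divides $\det M_{\lambda_m}=\pm1$. Hence the gcd of the entries of the Parikh vector of $A_m$ is $1$. If $A_m=v^q$ with $q\ge2$, then $q$ would divide every $|A_m|_a$, which is impossible. So $A_m$ is primitive, for every $m\ge1$ and every $d\ge2$.

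The only point needing care is the bookkeeping convention $M_{gh}=M_gM_h$ with entry $(a,b)=|h(b)|_a$, so that the $z$-th column is really the one we need. Even if the order or transposition were off, the determinant is still $\pm1$ and the argument is unaffected. No genuine obstacle is expected; this lemma does not need any return-word or palindromic-prefix machinery.
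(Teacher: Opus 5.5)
Your proposal is correct and follows essentially the same route as the paper: the Parikh vector of $A_m$ is the $z$-th column of the unimodular incidence matrix of $\mu^{m-1}\nu$, so its entries have greatest common divisor one, which rules out $A_m=W^q$ with $q\ge2$. The only difference is cosmetic. You get the gcd statement from cofactor expansion along that column, whereas the paper applies the integer inverse $M^{-1}$ to $M\mathbf e_z$ to recover $\mathbf e_z$.
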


\begin{proof}
Let $M$ be the incidence matrix of the morphism $\mu^{m-1}\nu$. Since the incidence matrix of every $L_a$ is unimodular, $M$ is unimodular as well. The Parikh vector of
\(A_m=\mu^{m-1}\nu(z)\)
is therefore $M\mathbf e_z$, where $\mathbf e_z$ is the unit vector corresponding to $z$. Its coordinates have greatest common divisor one: indeed, if an integer $q\ge2$ divided every coordinate of $M\mathbf e_z$, then, since $M^{-1}$ has integer entries, $q$ would divide every coordinate of
\(M^{-1}M\mathbf e_z=\mathbf e_z,\)
a contradiction. Hence the coordinates of the Parikh vector of $A_m$ have greatest common divisor one. If $A_m=W^q$ with $q\ge2$, every coordinate of its Parikh vector would be divisible by $q$, which is impossible.
\end{proof}

Once $A_m$ is known to be primitive, the visible block length $a_m$ must be the true basic period of $S_m$. This is the key rigidity statement: it prevents an occurrence from sliding to a nontrivial position inside one $A_m$ block.

\begin{lemma}[Least period]
\label{lem:minperiod}
For $r\ge2$, the least period of $S_m$ is $a_m$.
\end{lemma}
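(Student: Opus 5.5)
The plan is to show that $a_m$ is a period of $S_m$ and then rule out any smaller period using the Fine--Wilf theorem together with \cref{lem:primitive}. The argument uses only \cref{lem:structural} and \cref{lem:primitive}, so it applies for every $d\ge2$, not just $d\ge3$.

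First, $a_m$ is a period. By \cref{lem:structural}, $S_m=A_m^rC_m$ and $C_m\prec A_m$. Hence $S_m$ is a prefix of $A_m^{r+1}$, and therefore of $A_m^\omega$. So $S_m[i]=S_m[i+a_m]$ whenever both positions are defined, and $a_m\le|S_m|$.

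Next, suppose for contradiction that $S_m$ has a period $p<a_m$. Because $r\ge2$ and $C_m\neq\eps$ (as $C_m=\mu^{m-1}(C)$ with $C=\nu(0)$ nonempty), we get
\[
|S_m|\ge 2a_m+|C_m|>p+a_m-\gcd(p,a_m).
\]
Since $p$ and $a_m$ are both periods and this length condition holds, the Fine--Wilf theorem shows that $g=\gcd(p,a_m)$ is also a period of $S_m$. Note that $g\le p<a_m$ and $g$ divides $a_m$.

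Finally, restrict to the prefix $A_m=S_m[1..a_m]$. A period of $S_m$ is also a period of every factor, so $A_m$ has period $g$. Because $g$ divides $|A_m|=a_m$, this gives $A_m=W^{a_m/g}$ with $W=A_m[1..g]$ and $a_m/g\ge2$. This contradicts the primitivity of $A_m$ from \cref{lem:primitive}. Hence no period smaller than $a_m$ exists, and the least period is $a_m$.

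The only step needing care is checking the Fine--Wilf length hypothesis. This is exactly where $r\ge2$ is used: two full copies of $A_m$ already give length at least $2a_m$, which exceeds $p+a_m$. When $r=1$ the hypothesis can fail, which is consistent with this lemma being stated only for $r\ge2$.
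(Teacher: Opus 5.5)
Your proof is correct and follows the paper's argument: $a_m$ is a period because $S_m=A_m^rC_m$ with $C_m\prec A_m$, and the bound $|S_m|\ge 2a_m$ lets Fine--Wilf make $\gcd(p,a_m)$ a period. That gcd properly divides $a_m$, so $A_m$ would be a proper power, contradicting \cref{lem:primitive}.
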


\begin{proof}
The decomposition $S_m=A_m^rC_m$ and $C_m\prec A_m$ show that $a_m$ is a period. Suppose that $p<a_m$ were another period. The Fine--Wilf theorem states that if a word has periods $p$ and $q$ and length at least $p+q-\gcd(p,q)$, then $\gcd(p,q)$ is also a period \cite{FineWilf1965}. Here $|S_m|\ge2a_m$, so the theorem applies to the periods $p$ and $a_m$ and implies that $\gcd(p,a_m)$ is a period of the prefix $A_m$. Since this number properly divides $a_m$, the prefix $A_m$ would have a period properly dividing its length and hence would be a proper power, contradicting \cref{lem:primitive}.
\end{proof}

Before locating $B_m$, we make explicit the elementary periodicity argument that will be used twice. It formalizes the intuitive fact that, in a sufficiently long word of period $a$, a long prefix cannot reappear at a nontrivial shift smaller than $a$ unless that shift is itself a period.

\begin{lemma}[Periodic propagation]
\label{lem:period-propagation}
Let $T$ be a word of length $n$ having period $a$, let $q\ge1$, and assume $n\ge(q+1)a$. If the prefix $T[1..n-qa]$ occurs again starting at position $s+1$ for some $0<s<a$, then $s$ is a period of $T$.
\end{lemma}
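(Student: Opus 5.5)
The plan is a direct index-chasing argument that combines the two occurrences of the prefix with the period $a$. Write $P=T[1..n-qa]$. Since $s<a$, the occurrence at position $s+1$ fits inside $T$, and it gives
\[
T[i]=T[i+s]\qquad(1\le i\le n-qa).
\]
The hypothesis $n\ge(q+1)a$ means $n-qa\ge a$. So this relation holds on a whole window of length at least $a$, namely for every $i\in\{1,\dots,a\}$. This window is the starting point.

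Next I will propagate the relation by the period $a$. Take any index $i$ with $1\le i$ and $i+s\le n$, and reduce it modulo $a$ into the window: write $i=i_0+ka$ with $1\le i_0\le a$ and $k\ge0$. Because $a$ is a period of $T$, we have $T[i]=T[i_0]$ (stepping down by $a$ stays inside $T$). Likewise $T[i+s]=T[i_0+s]$, which is legitimate because $i_0+s\le i+s\le n$. Finally, since $i_0\le a\le n-qa$, the window relation gives $T[i_0]=T[i_0+s]$. Chaining the three equalities yields $T[i]=T[i+s]$ for every admissible $i$, which is exactly the statement that $s$ is a period of $T$.

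The only point requiring care is checking that all indices stay within $[1,n]$. I also need the window to cover a full residue system modulo $a$, and this is precisely where $n-qa\ge a$, that is $n\ge(q+1)a$, is used. No Fine--Wilf argument is needed; it is enough to note that $s<a\le n-qa$ makes the shifted occurrence well defined.
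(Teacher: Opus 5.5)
Your proof is correct and follows essentially the same route as the paper: in both, an arbitrary index is shifted by a multiple of the period $a$ into a window of length $a$ where the shifted-prefix relation $T[j]=T[j+s]$ holds, and the equality is then carried back using $a$. The only cosmetic difference is the choice of window: you reduce into the initial block $\{1,\dots,a\}$, whereas the paper takes the largest index $j\le n-qa$ in the residue class of $i$, so its window is $(n-(q+1)a,\,n-qa]$.
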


\begin{proof}
The second occurrence gives $T[i]=T[i+s]$ for $1\le i\le n-qa$. We show that the same equality holds for every $i\le n-s$. Fix $i>n-qa$. Let $k\ge1$ be the least integer such that $j=i-ka\le n-qa$. Since $i<n$ and $i-(n-qa)<qa$, we have $k\le q$. By minimality, $i-(k-1)a>n-qa$, and hence
\(j=i-ka>n-(q+1)a\ge0,\)
so $j\ge1$. Repeatedly using the period $a$ gives $T[i]=T[j]$ and $T[i+s]=T[j+s]$. Since $j\le n-qa$, the equality supplied by the shifted prefix gives $T[j]=T[j+s]$. Thus $T[i]=T[i+s]$. Therefore $T[i]=T[i+s]$ for all $1\le i\le n-s$, which is exactly the statement that $s$ is a period of $T$.
\end{proof}

The least-period statement now fixes the positions of the long factor $B_m$. The next lemma says that only the two obvious copies survive---the prefix copy and the copy shifted by one whole $A_m$ block---and these will become the two net occurrences in this branch.

\begin{lemma}[The two occurrences of $B_m$]
\label{lem:Bocc}
For every $r\ge2$,
\(\Occ_{S_m}(B_m)=\{1,a_m+1\}.\)
\end{lemma}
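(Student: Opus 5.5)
The plan is to show two inclusions. First, both listed positions are genuine occurrences. Second, any occurrence of $B_m$ at a start $s+1$ with $s\ge 1$ must have $s=a_m$. Throughout, put $n=|S_m|=ra_m+|C_m|$, and note that $|B_m|=n-a_m$.

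For the first inclusion, \cref{lem:structural} gives $S_m=A_m^rC_m=A_m\cdot A_m^{r-1}C_m=A_mB_m$. So $B_m$ is a prefix of $S_m$, and it also occurs at position $a_m+1$. No further input is needed.

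For the reverse inclusion, let $s+1$ be a start of $B_m$. The occurrence must fit inside $S_m$, so $s+|B_m|\le n$, which forces $s\le a_m$. The case $s=a_m$ is the second listed occurrence, so it remains to exclude $0<s<a_m$. For this I would apply \cref{lem:period-propagation} with $T=S_m$, period $a=a_m$ and $q=1$. The prefix $T[1..n-a_m]$ is exactly $B_m$. The length hypothesis $n\ge 2a_m$ holds because $r\ge2$. Hence an occurrence at $s+1$ with $0<s<a_m$ would make $s$ a period of $S_m$ smaller than $a_m$. This contradicts \cref{lem:minperiod}, which says the least period of $S_m$ is $a_m$ when $r\ge2$.

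So the whole argument reduces to these two earlier lemmas. The only step needing care is the index bookkeeping: one must check that the hypotheses of \cref{lem:period-propagation} match, in particular that the prefix of length $n-qa$ with $q=1$ is exactly $B_m$, and that occurrences with $s>a_m$ are ruled out purely by length. No real obstacle is expected, and this argument does not need the assumption $d\ge3$. In particular it covers the borderline case $r=2$, where $B_m=A_mC_m$ and $n-a_m=a_m+|C_m|$.
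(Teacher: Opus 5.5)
Your proposal is correct and follows the paper's proof step for step. Both arguments get the two occurrences from $S_m=A_mB_m$, with $C_m\prec A_m$ supplying the prefix occurrence. They then bound the shift by $s\le a_m$ using $|B_m|=|S_m|-a_m$, and exclude $0<s<a_m$ by applying \cref{lem:period-propagation} with $q=1$ and invoking \cref{lem:minperiod}.
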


\begin{proof}
The two displayed occurrences are immediate from $S_m=A_mB_m$. Since $|B_m|=|S_m|-a_m$, any occurrence of $B_m$ starts at a shift $s$ with $0\le s\le a_m$. The shifts $0$ and $a_m$ give the two displayed copies. If there were another one, then $0<s<a_m$ and the prefix of $S_m$ of length $|S_m|-a_m$ would reappear at shift $s$. Because $|S_m|\ge2a_m$, \cref{lem:period-propagation} with $q=1$ would make $s$ a period of $S_m$, contradicting the minimality of $a_m$ from \cref{lem:minperiod}.
\end{proof}

The two occurrences in \cref{lem:Bocc} are net: one is a prefix, the other a suffix, and $B_m$ occurs exactly twice. Since $|B_m|>a_m$, they overlap and form an ONOC. Their overlap is
\(O_m=A_m^{r-2}C_m.\)

To prove that these two occurrences are the complete list, the ONOC criterion asks for one final local fact: the one-letter expansion of their overlap must occur nowhere else. The next lemma supplies exactly this uniqueness, with a separate return-word argument only in the binary borderline case.

\begin{lemma}[Unique expanded bridge]
\label{lem:bridge-rge2}
For every $r\ge2$, the one-character expansion of the unique BNSO of the ONOC $\{B_m@1,B_m@(a_m+1)\}$ occurs exactly once in $S_m$.
\end{lemma}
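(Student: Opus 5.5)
The plan is to locate every occurrence of the overlap $O_m=A_m^{r-2}C_m$ inside $S_m$. Write $n=|S_m|=ra_m+|C_m|$. The two occurrences $B_m@1$ and $B_m@(a_m+1)$ occupy $(1,n-a_m)$ and $(a_m+1,n)$, so the BNSO is $(a_m+1,n-a_m)$, which is an occurrence of $O_m$. Its one-character expansion is $E=S_m[a_m..n-a_m+1]$, of length $|O_m|+2$. Both neighbouring letters exist, because $a_m\ge1$ and $n-a_m+1\le n$.

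Every occurrence of $E$ contains an occurrence of $O_m$ that is neither a prefix nor a suffix of $S_m$. So it suffices to prove the following claim:
\[
\Occ_{S_m}(O_m)=\{1,\ a_m+1,\ 2a_m+1\}.
\]
Granting the claim, the occurrence $O_m@1$ has no left neighbour. The occurrence $O_m@(2a_m+1)$ ends at $2a_m+|O_m|=n$, so it has no right neighbour. Hence neither can be the centre of a copy of $E$. Only $O_m@(a_m+1)$ remains, and it gives exactly the displayed copy of $E$.

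\textbf{Case $r\ge3$.} Here $O_m$ is the prefix $S_m[1..n-2a_m]$, and $n\ge 3a_m$. By \cref{lem:structural}, $a_m$ is a period of $S_m$.
\begin{itemize}
\item Because $a_m$ is a period, $O_m$ also occurs at $a_m+1$ and $2a_m+1$.
\item Any occurrence starts at a shift $s$ with $0\le s\le 2a_m$.
\item If $s\not\equiv 0 \pmod{a_m}$, subtract $a_m$ if necessary to get $0<s'<a_m$. Since $a_m$ is a period, $O_m$ again occurs at shift $s'$.
\item \cref{lem:period-propagation} with $q=2$ then makes $s'$ a period of $S_m$. This contradicts \cref{lem:minperiod}.
\end{itemize}
This proves the claim when $r\ge3$.

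\textbf{Case $r=2$.} This is the borderline case and the main obstacle. Now $O_m=C_m$ and $n<3a_m$, so \cref{lem:period-propagation} no longer applies with $q=2$. For $d\ge3$ the claim is exactly \cref{lem:Cocc} with $r=2$, so nothing new is needed.

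\textbf{Binary case $d=2$, $r=2$.} Here \cref{lem:Cocc} is unavailable, and I would redo its localization directly. The steps are:
\begin{itemize}
\item Identify the shortest palindromic prefix containing $C_m=\lambda_m(0)$ in the standard Sturmian word.
\item The natural candidate is $u_{r_K+1}$ or its successor; I would check which one using the length formulas for $s_K$, $t_K$ from \cite[Lemmas~3.15 and~3.19]{Peltomaki2024}.
\item If the container is $fC_mg$ with $f=\eps$, Justin--Vuillon's theorems give return words $\lambda_m(x)$. The finite derived word of $S_m=\lambda_m(110)$ is then $110$, yielding occurrences only at $1$, $a_m+1$, $2a_m+1$, as in \cref{lem:Cocc}.
\end{itemize}

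If instead $f\neq\eps$, I would fall back on a direct argument. I would apply \cref{lem:return} to the derived word. Any extra occurrence of $C_m$ would then have to straddle a return boundary inside an $A_m$ block. I would rule this out by showing it forces a period of $A_mC_m$ shorter than $a_m$. That period would contradict primitivity (\cref{lem:primitive}) via Fine--Wilf, since $|A_mC_m|\ge a_m+|C_m|$ and $|C_m|$ is at least the Sturmian block length of the preceding level.
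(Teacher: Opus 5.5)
Your treatment of $r\ge3$ and of $r=2$, $d\ge3$ is the same as the paper's. The binary case $d=2$, $r=2$, however, has a genuine gap. There the claim $\Occ_{S_m}(O_m)=\{1,a_m+1,2a_m+1\}$, to which you reduce everything, is false, so neither of your sub-plans can succeed.

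Put $a=e_0$ and $D_m=\mu^{m-1}(1)$. Then $A_m=C_m^aD_m$ and $S_m=C_m^aD_mC_m^aD_mC_m$. For $m\ge2$:
\begin{enumerate}[label=(\roman*)]
\item The shortest palindromic prefix containing $C_m$ is $\Pal(\delta^{m-1})$. It comes strictly before $u_{r_K+1}$, which is exactly where the localization of \cref{lem:Cocc} breaks down for $d=2$.
\item It has $C_m$ as a prefix, and its return words are $C_m$ and $D_m$.
\item The finite derived word of $S_m$ is $(0^a1)^20$, not $110$, so $C_m$ occurs at all $2a+3$ return boundaries.
\end{enumerate}
Concretely, for $a=1$, $m=2$ one has $S_2=0101001010100101010$ and $a_2=7$, and $C_2=01010$ occurs at $1,6,8,13,15$. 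For $a=2$, $m=1$ one has $S_1=0010010$ and $a_1=3$, and $C_1=0$ occurs at $1,2,4,5,7$. So the container is neither $u_{r_K+1}$ nor its successor. Your Fine--Wilf fallback would also have to exclude occurrences that actually exist.

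The missing idea is to control only the occurrences of $C_m$ with the right left neighbour, not all of them. Since $f=\eps$ for the container $\Pal(\delta^{m-1})$, \cref{lem:return} shows that the letter preceding an occurrence of $C_m$ is the return type preceding its boundary. The left letter of the bridge is the last letter of $A_m$, namely $1$. In $(0^a1)^20$ exactly two boundaries are preceded by type $1$:
\begin{enumerate}[label=(\roman*)]
\item the internal one at $a_m+1$, which is followed by $0$;
\item the terminal suffix occurrence, which has no right neighbour.
\end{enumerate}
Hence the expanded bridge $1C_m0$ occurs exactly once. This is the paper's argument. The case $m=1$ is checked directly in $S_1=(0^a1)^20$: the bridge is $101$ for $a=1$ and $100$ for $a\ge2$, unique in both cases.
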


\begin{proof}
If $r\ge3$, we have
\(\Occ_{S_m}(O_m)=\{1,a_m+1,2a_m+1\}.\)
Indeed, $|O_m|=|S_m|-2a_m$, so every possible start has shift between $0$ and $2a_m$. The shifts $0,a_m,2a_m$ give the three displayed copies. Suppose that an additional occurrence starts at shift $s$. If $a_m<s<2a_m$, then the period $a_m$ shifts this whole occurrence one block to the left, producing an occurrence at shift $s-a_m\in(0,a_m)$. We may therefore assume $0<s<a_m$. Since $r\ge3$ gives $|S_m|\ge3a_m$, \cref{lem:period-propagation} with $q=2$ would make $s$ a period of $S_m$, contradicting \cref{lem:minperiod}. Thus no additional start exists. Only the middle occurrence of $O_m$ has both a left and a right extension inside $S_m$, so its one-character expansion is unique.

If $r=2$ and $d\ge3$, then $O_m=C_m$ and \cref{lem:Cocc} gives
\(\Occ_{S_m}(C_m)=\{1,a_m+1,2a_m+1\}.\)
Again only the middle occurrence has both extension characters inside $S_m$, so its one-character expansion is unique.

It remains to consider the binary case $d=2$ with $r=2$. Write $a=e_0$ and
\(\delta=0^a1^2,\qquad \mu=L_0^aL_1^2.\)
At the directive cut after $\delta^{m-1}$, the shortest palindromic prefix containing $C_m=\mu^{m-1}(0)$ yields return words $C_m$ and $D_m=\mu^{m-1}(1)$, and the finite return-word factorization of $S_m$ is coded by
\((0^a1)^2 0.\)
Among the return boundaries carrying $C_m$, exactly two are preceded by return type $1$: the internal overlap and the final suffix occurrence. The internal one has right extension $0$, whereas the suffix occurrence has no right extension. Therefore the expanded BNSO is
\(1C_m0\)
and occurs exactly once. For $m=1$ this is checked directly in $S_1=(0^a1)^20$: the expanded bridge is $101$ when $a=1$ and $100$ when $a\ge2$, and is unique in either case.
\end{proof}

We can now close the whole branch $e_{d-1}\ge2$. The two long boundary occurrences cover $S_m$, and the unique expanded bridge prevents any additional net occurrence from crossing between them.

\begin{theorem}[Two-occurrence branch]
\label{thm:rge2}
If $e_{d-1}\ge2$, then for every $m\ge1$,
\[
\#\NO(S_m)=2.
\]
The only net occurrences are $B_m@1$ and $B_m@(|A_m|+1)$.
\end{theorem}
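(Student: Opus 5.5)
The plan is to assemble the lemmas of this section into the ONOC completeness criterion. There are three steps.

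First, I check that $B_m@1$ and $B_m@(a_m+1)$ are net occurrences. By \cref{lem:Bocc}, $B_m$ occurs exactly twice in $S_m$, so it is repeated. The occurrence at position $1$ has its left extension unique by the boundary convention. Its right extension is $B_m$ followed by one letter. Any occurrence of this extension would begin with an occurrence of $B_m$, so it could only start at $1$ or at $a_m+1$; the second is impossible because $B_m@(a_m+1)$ is a suffix and has no room for a further letter. Hence this right extension is unique. Symmetrically, $B_m@(a_m+1)$ is a suffix, so its right extension is unique by convention. Its left extension would start at position $a_m$, while the only other copy of $B_m$, at position $1$, has no left neighbour; so the left extension is unique as well. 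Thus both occurrences lie in $\NO(S_m)$.

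Second, I verify that the two occurrences form an ONOC. Since $S_m=A_mB_m$, we have $|B_m|=(r-1)a_m+|C_m|>a_m$. Hence the second start $a_m+1$ is at most the first end $|B_m|$, so the intervals overlap. The first interval starts at $1$ and the second ends at $|S_m|$, so the ONOC conditions hold. The unique BNSO is the overlap $(a_m+1,|B_m|)$, namely $O_m=A_m^{r-2}C_m$.

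Third, I apply the stated consequence of \cite[Lemma~12]{GuoKishi2025}: if every expanded BNSO occurs only once, the ONOC contains all net occurrences. \cref{lem:bridge-rge2} supplies exactly this uniqueness for all $r\ge2$, all $d$ and all $m\ge1$, including the binary and $m=1$ cases. Therefore $\NO(S_m)=\{B_m@1,B_m@(a_m+1)\}$, and the two occurrences are distinct, so $\#\NO(S_m)=2$.

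The only delicate point is the net-ness verification in the first step, which relies on the exact occurrence set from \cref{lem:Bocc}. Since every nontrivial ingredient is already provided by the earlier lemmas, I expect no real obstacle beyond checking this boundary bookkeeping carefully.
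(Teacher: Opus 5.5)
Your proposal is correct and follows the paper's proof step for step. You get netness of $B_m@1$ and $B_m@(|A_m|+1)$ from the two-element occurrence set in \cref{lem:Bocc}, obtain the ONOC from $|B_m|>|A_m|$, and feed the uniqueness of the expanded bridge from \cref{lem:bridge-rge2} into the Guo--Kishi completeness criterion; your explicit check of the one-letter extensions just spells out what the paper compresses into the remark that one occurrence is a prefix, the other a suffix, and $B_m$ occurs exactly twice.
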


\begin{proof}
The two occurrences are net and form an ONOC. By \cref{lem:bridge-rge2}, its expanded BNSO is unique. The ONOC completeness lemma of Guo and Kishi \cite{GuoKishi2025} excludes every further net occurrence.
\end{proof}

\section{The Case $e_{d-1}=1$}
\label{sec:req1}

When the last directive letter occurs only once, the two boundary occurrences no longer cover the whole approximant by themselves. A third, central palindromic occurrence is therefore needed, and most of the work in this section is devoted to locating it and proving that the two resulting overlaps have unique one-letter expansions.

Assume $e_z=1$. We first treat $d\ge3$.

Let $w_m$ be $\delta^m$ with its final letters $z-1,z$ removed, let $\lambda'_m$ be its directive morphism, and put
\(X_m=\lambda'_m(z-1),\qquad Y_m=\lambda'_m(z),\qquad Z_m=\lambda'_m(0).\)
Since $\mu^m=\lambda'_mL_{z-1}L_z$ and
\(L_{z-1}L_z(0)=(z-1)z(z-1)0,\)
we have
\(S_m=X_mY_mX_mZ_m.\)
Moreover,
\(A_m=X_mY_m,\qquad C_m=X_mZ_m.\)

Define the palindromic prefix
\(P_m=\Pal(w_m),\)
which will provide the central net occurrence. Separately, let
\(H_m\)
denote the shortest palindromic prefix containing $Z_m$; see \cref{lem:localization}.

In the generic cases $H_m=P_m$. When $d=3$ and $e_1=1$, however, one has
\(H_m=\Pal(\delta^{m-1}) \qquad\text{and in general }H_m\ne P_m.\)
This distinction is essential: $P_m$ is the central net factor, whereas $H_m$ is used only to control occurrences of $Z_m$.

At the directive cut $w_m$, the finite derived prefix of $S_m$ is
\((z-1)z(z-1)0.\)

The central palindrome is useful only if its occurrences are completely controlled. The next lemma shows that exactly three complete copies of $P_m$ fit inside $S_m$ and, crucially, records their three different contexts; the middle copy will become the third net occurrence.

\begin{lemma}[Three occurrences of the central palindrome]
\label{lem:Pocc}
For $m\ge2$, $P_m$ has exactly three complete occurrences in $S_m$, starting at
\(1,\qquad |X_m|+1,\qquad |X_mY_m|+1,\)
with contexts
\((\bot,z-1),\qquad(z-1,z),\qquad(z,z-1).\)
\end{lemma}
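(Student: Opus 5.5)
We plan to run the same return-word mechanism as in \cref{lem:Cocc}, now at the directive cut $w_m$. By definition $P_m=\Pal(w_m)$ is the palindromic prefix produced after reading $w_m$; equivalently $P_m=u_{n+1}$ with $n=|w_m|$, and its associated morphism is $\lambda'_m$. Since $P_m$ is itself palindromic, it is its own palindromic container, so no conjugating prefix is needed. Justin--Vuillon's Theorem~4.4 then gives the return words to $P_m$ as $\lambda'_m(x)$ for the letters $x$ of the directive tail. The tail after $w_m$ begins with $(z-1)z\,\delta^\omega$, so the letters $0$, $z-1$ and $z$ occur in it. Hence $X_m$, $Y_m$ and $Z_m$ are exactly the return words coded by $z-1$, $z$ and $0$.

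The first step is to check that $S_m=\lambda'_m((z-1)z(z-1)0)=X_mY_mX_mZ_m$ starts at a return boundary of $P_m$. This holds because $P_m$ is a prefix of the epistandard word and hence of $S_m$. Once $|P_m|\le|S_m|$ is confirmed (it is a prefix of $u_{r_K+1}$ and so shorter than $S_m$), the boundary is established.

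We then apply \cref{lem:return} to the finite derived word $y=(z-1)z(z-1)0$. It places occurrences of $P_m$ at the return boundaries $1$, $|X_m|+1$, $|X_mY_m|+1$ and $|X_mY_mX_m|+1$. The contexts at the two internal boundaries are read off from adjacent return types: $(z-1,z)$ and $(z,z-1)$. The initial occurrence has left context $\bot$ and right context the first letter $z-1$ of the following return type.

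The fourth boundary occurrence, and any occurrence beyond it, must be excluded as incomplete. We need $|P_m|>|Z_m|$, so that the copy starting at $|X_mY_mX_m|+1$ runs past the end of $S_m$. This follows from \cref{lem:localization}. In the generic case $P_m=U_m$ is the shortest palindromic prefix containing $Z_m$ and $Z_m$ is a prefix of it. Since $P_m$ is a palindrome and $Z_m$ is not, $Z_m$ is a proper prefix. In the exceptional case $d=3$, $e_1=1$, the container of $Z_m$ is $\Pal(\delta^{m-1})$, which is itself a proper prefix of $P_m$ because $\delta^{m-1}$ is a proper prefix of $w_m$ when $m\ge2$. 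So again $|Z_m|<|P_m|$.

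Next we exclude occurrences strictly between the return boundaries. Every occurrence of $P_m$ in the infinite word starts at a return boundary, by the definition of return words and Theorem~3.3 of Justin--Vuillon. A complete occurrence inside the finite prefix $S_m$ is also an occurrence in the infinite epistandard word. Moreover, the return-word factorization of the infinite word restricts to the factorization of $S_m$ given by the derived prefix. Hence every complete occurrence is one of the boundaries already listed.

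The main obstacle we expect is ensuring the derived-word reading is legitimate. The boundary after the last $X_m$ must really be a return boundary, and the contexts supplied by \cref{lem:return} require the complete returns to lie inside the infinite word rather than inside $S_m$. We would handle this by working in the infinite word $\lambda'_m((z-1)z\,\delta^\omega)$ and then truncating. The truncation introduces no new occurrences, and it removes only occurrences that are not complete in $S_m$.
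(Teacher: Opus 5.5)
Your proposal follows the paper's proof: the return-boundary lemma at the cut $w_m$, with finite derived word $(z-1)z(z-1)0$, gives the three boundary copies of $P_m$ and their contexts, and the fourth copy is excluded by $|Z_m|<|P_m|$, taken from the localization lemma in the generic case and from $|Z_m|\le|\Pal(\delta^{m-1})|<|P_m|$ in the exceptional ternary case. Three details need tightening, none of which changes the route:
\begin{itemize}
\item In the generic case, your claim that $Z_m$ is not a palindrome is not proved. Use instead the strict inequality $|Z_m|<|U_m|$ established in the proof of the localization lemma. Alternatively, for both cases at once, combine Justin's formula $\Pal(w_m0)=Z_mP_m$ with the last-occurrence formula; this gives $|P_m|-|Z_m|=|\Pal(w')|>0$, where $w'$ is the nonempty prefix of $w_m$ before its last $0$.
\item The third copy and its right neighbour $z-1$ lie inside $S_m$ only if $|P_m|<|X_mZ_m|=|C_m|$. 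Your bound $|P_m|\le|S_m|$ does not give this. It does hold: $e_z=1$ gives $P_m=u_{r_K}$, and the proof of the lemma on occurrences of $C_m$ shows $|u_{r_K}|<|C_m|$.
\item The infinite word to truncate is $\lambda'_m(\mathbf{s}')$, where $\mathbf{s}'$ is the epistandard word directed by $(z-1)z\delta^\omega$ (it begins with $(z-1)z(z-1)0$). It is not $\lambda'_m$ applied to the directive tail itself, which does not even have $S_m$ as a prefix.
\end{itemize}
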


\begin{proof}
Apply \cref{lem:return} at the directive cut $w_m$. The first three return boundaries of the finite derived word $(z-1)z(z-1)0$ give the three displayed occurrences and contexts. At the fourth boundary only $Z_m$ remains. In the generic case $H_m=P_m$ and \cref{lem:localization} gives $|Z_m|<|P_m|$. In the exceptional ternary case, $Z_m$ is a prefix of $H_m$ and $H_m\prec P_m$, so again $|Z_m|<|P_m|$. Hence no fourth complete occurrence of $P_m$ fits in $S_m$.
\end{proof}

The middle occurrence of $P_m$ is therefore net. Since $d\ge3$, \cref{lem:Cocc} with $r=1$ gives exactly two occurrences of $C_m$,
\(C_m@1,\qquad C_m@(|A_m|+1),\)
and the return-boundary contexts show that both are net. These three occurrences form an ONOC. Figure~\ref{fig:three-onoc} records the geometry used in the rest of the argument: the first and middle occurrences overlap in $Z_m$, while the middle and final occurrences overlap in $V_m$.

\begin{figure}[t]
\centering
\begin{tikzpicture}[x=0.72cm,y=0.52cm,>=latex]
  \draw[very thick] (0,0) -- (12,0);
  \node[anchor=west] at (12.15,0) {$S_m$};
  \draw[thick] (0,1.7) rectangle (5.1,2.35);
  \node at (2.55,2.03) {$C_m@1$};
  \draw[thick] (3.4,0.7) rectangle (8.5,1.35);
  \node at (5.95,1.03) {middle $P_m$};
  \draw[thick] (7.2,1.7) rectangle (12,2.35);
  \node at (9.6,2.03) {$C_m@(|A_m|+1)$};
  \draw[<->] (3.4,0.35) -- (5.1,0.35);
  \node[below] at (4.25,0.35) {$Z_m$};
  \draw[<->] (7.2,0.35) -- (8.5,0.35);
  \node[below] at (7.85,0.35) {$V_m$};
\end{tikzpicture}
\caption{Schematic geometry of the three-member ONOC in the branch $e_{d-1}=1$, $d\ge3$ (not to scale). The proof reduces completeness to uniqueness of the one-letter expansions of the two displayed overlaps.}
\label{fig:three-onoc}
\end{figure}

The first overlap is $Z_m$. Its expanded bridge is
\(E_m^{(1)}=(z-1)Z_m0.\)
We prove its uniqueness using the shortest palindromic container $H_m$ from \cref{lem:localization}. In the Justin--Vuillon return-word formula $H_m=fZ_mg$, we have $f=\eps$ because $Z_m$ is a prefix of $H_m$. Hence the left context of an occurrence of $Z_m$ at a return boundary is determined by the return type at that boundary. The occurrence of $Z_m$ forming the first overlap is the suffix of the prefix occurrence of $C_m=X_mZ_m$. Its left extension is the last letter of $X_m$, namely $z-1$. Moreover, $A_m=C_m\,\mu^{m-1}(0^{e_0-1}z)$ by \cref{lem:structural}; since $m\ge2$ and every letter image under $\mu$ begins with $0$, the suffix of $A_m$ following this prefix copy of $C_m$ begins with $0$. Hence this overlap occurrence is followed by $0$.

In the generic cases $H_m=P_m$. At the cut $w_m$ the finite derived prefix is
\((z-1)z(z-1)0.\)
Thus exactly two occurrences of $Z_m$ are preceded by return type $z-1$: the overlap occurrence just described and the terminal suffix occurrence. The latter has no right extension in $S_m$. Hence only the overlap occurrence realizes $(z-1)Z_m0$.

It remains to check the exceptional case $d=3$ and $e_1=1$. Write $a=e_0$. Here
\(H_m=\Pal(\delta^{m-1}),\)
and the return-word morphism at this cut is $\mu^{m-1}$. Consequently the finite derived word of $S_m$ is
\(\mu(0)=0^a1\,0^a2\,0^a1\,0.\)
Again there are exactly two boundaries preceded by return type $1=z-1$. The first gives the internal occurrence of $Z_m$ followed by $0$, while the second gives the terminal suffix occurrence and therefore has no right extension. Hence $(z-1)Z_m0$ occurs exactly once in $S_m$ also in the exceptional ternary case.

For the second overlap, let $q'_m$ be $\delta^{m-1}$ with its final $z$ deleted and put
\(V_m=\Pal(q'_m).\)
Then $w_m=q'_mzt_m$ for a $z$-free word $t_m$. We claim that
\[
P_m=Y_mV_m.
\tag{5.1}
\]
Indeed, the last-occurrence formula \cite{Justin2005} gives
\(\Pal(w_mz)=P_mV_m^{-1}P_m,\)
whereas Justin's formula \cite[Sec.~6.2.1]{GlenJustin2009} gives
\[
\Pal(w_mz)=\mu_{w_m}(z)P_m=Y_mP_m.
\]
Cancelling the common suffix $P_m$ yields $Y_m=P_mV_m^{-1}$, which is equivalent to (5.1).

It follows that the second and third members of the ONOC overlap exactly in $V_m$. To determine both extension letters and prove uniqueness, let $\kappa_m$ be the directive morphism associated with $q'_m$. Since $e_z=1$,
\[
\mu^m=\kappa_m L_z\mu,
\qquad\text{hence}\qquad
S_m=\kappa_m\bigl(L_z(\mu(0))\bigr).
\tag{5.2}
\]
Thus, at the directive cut $q'_m$, the finite derived word of $S_m$ with respect to the palindromic prefix $V_m$ is
\(\mathcal{D}_m=L_z(\mu(0)).\)
Now $\mu(0)=AC$ because $e_z=1$. The morphism $\nu$ uses only $L_0,\ldots,L_{z-1}$, so $A=\nu(z)$ contains exactly one occurrence of $z$, as its final letter, while $C=\nu(0)$ contains no $z$. Hence $\mu(0)$ contains exactly one $z$, and this $z$ is followed by a letter different from $z$. Under $L_z$, this unique transition produces one occurrence of $zz$; no other letter image, nor any boundary between two consecutive letter images, can produce $zz$. Therefore the two-letter factor $zz$ occurs exactly once in $\mathcal{D}_m$. By \cref{lem:return}, the unique $zz$-boundary corresponds to the unique occurrence of $V_m$ having context $(z,z)$. Consequently the expanded second BNSO is
\(E_m^{(2)}=zV_mz,\)
and it occurs exactly once in $S_m$.

At this point the geometry of the $e_{d-1}=1$ branch is complete: the two boundary copies of $C_m$ and the middle copy of $P_m$ form an ONOC, and the two expanded overlaps have just been shown to be unique. The following theorem packages these local facts into the global count.

\begin{theorem}[Three-occurrence branch]
\label{thm:r1-dge3}
If $d\ge3$, $e_{d-1}=1$, and $m\ge2$, then
\[
\#\NO(S_m)=3.
\]
\end{theorem}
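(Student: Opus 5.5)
The plan is to assemble the three-member ONOC built above and apply the Guo--Kishi completeness criterion, exactly as in the proof of \cref{thm:rge2}. The three candidate net occurrences are $C_m@1$, the middle copy $P_m@(|X_m|+1)$, and $C_m@(|A_m|+1)$.

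First I check that each of the three is a net occurrence. For the two copies of $C_m$, \cref{lem:Cocc} with $r=1$ shows that $C_m$ occurs exactly twice, at $1$ and $|A_m|+1$. The first copy is a prefix, so its left extension is unique by convention. Its right extension is unique because the other copy is a suffix and has no right extension. The symmetric argument applies to the suffix copy. For the middle $P_m$, \cref{lem:Pocc} gives exactly three complete occurrences, with the pairwise distinct contexts $(\bot,z-1)$, $(z-1,z)$ and $(z,z-1)$. The left extension $(z-1)P_m$ can therefore be realized only by the middle copy, since the third copy is preceded by $z$ and the first has no left neighbour. Likewise the right extension $P_mz$ is realized only by the middle copy. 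So the middle copy is net.

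Next I verify the ONOC inequalities. This is where the geometry enters, and I expect it to be the step requiring the most care, since Figure~\ref{fig:three-onoc} is only schematic. For the first overlap, the middle $P_m$ starts at $|X_m|+1$, which is where $Z_m$ begins inside $C_m=X_mZ_m$. Since $Z_m$ is a proper prefix of $P_m$ (by \cref{lem:localization} and the length comparison used in \cref{lem:Pocc}), the first overlap is exactly $Z_m$, and it is nonempty. For the second overlap, (5.1) gives $P_m=Y_mV_m$, so the middle copy ends at $|X_mY_m|+|V_m|=|A_m|+|V_m|$. The final $C_m$ starts at $|A_m|+1$, so the overlap is $V_m$, again nonempty. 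Finally, the first member starts at $1$ and the last ends at $|S_m|$. It remains to confirm that $V_m$ is a prefix of $C_m$, which amounts to $V_m\prec X_m$. I would derive this from $V_m$ being a palindromic prefix shorter than $P_m$ that is also a prefix of $P_m=X_m\cdots$.

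With the ONOC in hand, the two expanded BNSOs are $E_m^{(1)}=(z-1)Z_m0$ and $E_m^{(2)}=zV_mz$. The preceding discussion shows that each occurs exactly once in $S_m$, covering both the generic case $H_m=P_m$ and the exceptional ternary case $d=3$, $e_1=1$. By the consequence of \cite[Lemma~12]{GuoKishi2025}, the ONOC therefore contains all net occurrences. The three members are distinct occurrences, so $\#\NO(S_m)=3$.
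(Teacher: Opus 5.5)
Your proposal is correct and follows the paper's proof. You show that the two boundary copies of $C_m$ and the middle copy of $P_m$ are net and form an ONOC, that the expanded bridges $(z-1)Z_m0$ and $zV_mz$ are unique, and then invoke the Guo--Kishi criterion; your netness checks via \cref{lem:Cocc} and \cref{lem:Pocc} and the explicit ONOC inequalities simply spell out what the paper leaves implicit.

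The step you flag as remaining is unnecessary. The middle occurrence of $P_m=Y_mV_m$ is complete and ends at $|A_m|+|V_m|\le|S_m|$, so the overlap interval lies inside the suffix copy of $C_m$, and $V_m$ is automatically a prefix of $C_m$. That is fortunate, because $V_m\prec X_m$ is only a sufficient condition for this. Your sketch would not establish it either: knowing that $V_m$ and $X_m$ are both prefixes of $P_m$ does not tell you which one is shorter.
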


\begin{proof}
The two boundary occurrences of $C_m$ and the middle occurrence of $P_m$ are net and form an ONOC. Both expanded BNSOs are unique, so the ONOC completeness lemma excludes all further net occurrences.
\end{proof}

\subsection{The binary Sturmian case}

For a binary alphabet, epistandard words coincide with the classical standard Sturmian words \cite{GlenJustin2009,Peltomaki2024}; we use this traditional name only to identify the class, and no additional Sturmian theory is needed below. The final count is the same as for $d\ge3$, but the localization mechanism used above is no longer valid. We therefore rebuild the three occurrences from a short return-word factorization and first make explicit the palindromic-containment facts that determine the two overlaps.

Let $d=2$, $a=e_0$, and $e_1=1$. Then $\mu=L_0^aL_1$, while $C_m=\mu^{m-1}(0)$ and $D_m=\mu^{m-1}(1)$. Moreover, $S_m=C_m^aD_mC_m$. For $a=1$, $\mu$ is the square of the usual Fibonacci morphism $\varphi$, defined by $\varphi(0)=01$ and $\varphi(1)=0$. Hence the result for $m\ge2$ follows from the Fibonacci theorem of Guo and Kishi \cite[Theorem~34]{GuoKishi2025}. Assume from now on that $a\ge2$.

For $j\ge1$, put
\(w_j=(0^a1)^{j-1}, \qquad \widehat H_j=\Pal(w_j).\)
The next lemma supplies the localization that replaces \cref{lem:localization} in the binary case. Besides identifying the shortest palindromic container of $C_m$, it shows directly that the two factors later called overlaps are exactly the geometric intersections of the three candidate net occurrences.

\begin{lemma}[Binary palindromic container and overlaps]
\label{lem:binary-container}
Let $m\ge2$ and $a\ge2$. Define
\[
\widehat B_m=\Pal(w_m0^{a-1}),
\qquad
O_m^{(1)}=\Pal(w_m0^{a-2}),
\qquad
O_m^{(2)}=\Pal(w_{m-1}0^a).
\]
Then:
\begin{enumerate}[label=(\roman*)]
\item $\widehat H_m$ is the shortest palindromic prefix containing $C_m$;
\item $\widehat B_m=C_m^{a-1}\widehat H_m$ and $O_m^{(1)}=C_m^{a-2}\widehat H_m$;
\item $\widehat H_m=D_mO_m^{(2)}$ and $O_m^{(2)}\prec C_m$.
\end{enumerate}
Consequently, the two occurrences of $\widehat B_m$ starting at $1$ and $|C_m|+1$ overlap exactly in $O_m^{(1)}$, while the second of these occurrences and the suffix occurrence of $C_m$ overlap exactly in $O_m^{(2)}$.
\end{lemma}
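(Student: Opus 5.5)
The plan rests on three tools. The first is Justin's formula $\Pal(vx)=\mu_v(x)\Pal(v)$, used in the same way as in the derivation of (5.1). The second is the standard run structure $\Pal(v x^{k})=(\mu_v(x))^{k}\Pal(v)$, valid when $x$ does not occur in the last directive positions; more precisely, the iterated form of Justin's formula along a run gives $\Pal(vx^k)=\mu_v(x)^k\Pal(v)$. The third is the last-occurrence formula of Justin \cite{Justin2005}. Throughout, $\mu_{w_m}=\mu^{m-1}$, so $\mu_{w_m}(0)=C_m$ and $\mu_{w_m}(1)=D_m$.

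Part (ii) comes first, since it is the most direct. Applying Justin's formula repeatedly along the run of $0$'s gives
\[
\Pal(w_m0^{j})=\mu_{w_m}(0)^{j}\,\Pal(w_m)=C_m^{j}\widehat H_m ,\qquad 0\le j\le a .
\]
Here we use that each step $\Pal(v0)=\mu_v(0)\Pal(v)$, together with $\mu_{w_m0^{i}}(0)=\mu_{w_m}L_0^{i}(0)=C_m$. Taking $j=a-1$ and $j=a-2$ yields both identities in (ii).

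Part (iii) is next. The left identity is Justin's formula again, now with $v=w_{m-1}0^a$ and $x=1$:
\[
\widehat H_m=\Pal(w_{m-1}0^a1)=\mu_{w_{m-1}0^a}(1)\,O_m^{(2)} .
\]
Since $\mu_{w_{m-1}0^a}L_1=\mu^{m-1}$ and $L_1(1)=1$, we get $\mu_{w_{m-1}0^a}(1)=\mu^{m-1}(1)=D_m$. For $O_m^{(2)}\prec C_m$, write $C_m=\mu_{w_{m-1}0^a}(L_1(0))=\mu_{w_{m-1}0^a}(1)\,\mu_{w_{m-1}0^a}(0)$. It then suffices to compare lengths with $\widehat H_m=D_mO_m^{(2)}$, together with the fact that $C_m$ and $\widehat H_m$ share the prefix $D_m$. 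Alternatively, one can use the classical fact that $\Pal(v)$ is a proper prefix of $\mu_v(x)\Pal(v)$ whenever $x$ occurs in $v$, and then check directly the length inequality $|O_m^{(2)}|<|C_m|$. This length inequality reduces to $|\Pal(w_{m-1}0^a)|<|\mu_{w_{m-1}0^a}(10)|$, which follows from $|\Pal(v)|\le|\mu_v(x)|+|\mu_v(y)|-2$, the Sturmian length identity for the two letter images. I expect this inequality to be the main technical point. My first attempt at it would be induction on $m$ via Peltom\"aki's Lemma~3.15, namely $u_{r_k}\prec s_k$.

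Part (i) then follows. By (iii) and (ii), $C_m$ begins with $D_m$ followed by a prefix of $\widehat H_m$. Moreover, $|C_m|>|\widehat H_{m-1}\text{-level palindrome }\Pal(w_{m-1}0^a)|=|O_m^{(2)}|$. Together with $C_m$ being a prefix of $\widehat H_m$ (compare the two expansions $C_m^a\widehat H_m\cdots$), this shows that $\widehat H_m$ is the first palindromic prefix longer than $C_m$. This is because the palindromic prefixes of the standard word are exactly the $\Pal$ of prefixes of the directive word, and $O_m^{(2)}$ is the one immediately preceding $\widehat H_m$.

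Finally, the overlap statements. The occurrence of $\widehat B_m$ at $1$ covers $[1,(a-1)|C_m|+|\widehat H_m|]$. Its shift by $|C_m|$ ends at $a|C_m|+|\widehat H_m|=|S_m|-|C_m|+|O_m^{(2)}|$, using $|D_m|=|\widehat H_m|-|O_m^{(2)}|$. The first overlap therefore has length $(a-2)|C_m|+|\widehat H_m|=|O_m^{(1)}|$ and is a prefix of the shifted copy, hence equals $O_m^{(1)}$. The suffix $C_m$ starts at $|S_m|-|C_m|+1$, so its overlap with the shifted copy has length $|O_m^{(2)}|$. Because $O_m^{(2)}\prec C_m$, this overlap is exactly $O_m^{(2)}$.
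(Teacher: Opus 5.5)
Your treatment of (ii), of the identity $\widehat H_m=D_mO_m^{(2)}$, of (i), and of the two overlaps follows the paper. The genuine difference is in the proof of $O_m^{(2)}\prec C_m$.

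The paper proves the prefix relation $\widehat H_j\prec D_jC_j$ directly by induction on $j$, using $O_{j+1}^{(2)}=C_j^a\widehat H_j$ and $C_{j+1}=C_j^aD_jC_j$, so lengths never enter. You instead get $|O_m^{(2)}|<|C_m|$ from the binary identity $|\Pal(v)|=|\mu_v(0)|+|\mu_v(1)|-2$, and then convert the length bound into a prefix relation.

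That identity is not the obstacle you expect. It is a one-line induction on $|v|$ from Justin's formula, and the same induction gives the word identity $\mu_v(xy)=\Pal(v)xy$ for $\{x,y\}=\{0,1\}$, since $\mu_{vx}(xy)=\mu_v(x)\mu_v(xy)$ and $\mu_{vx}(yx)=\mu_v(x)\mu_v(yx)$.
\begin{itemize}
\item Taking $v=w_{m-1}0^a$ gives $C_m=O_m^{(2)}10$. This settles (iii) outright and even shows $|O_m^{(2)}|=|C_m|-2$.
\item Taking $v=w_m$ gives $C_mD_m=\widehat H_m01$, hence $C_m\prec\widehat H_m$ for (i).
\end{itemize}
This route gives a more explicit and slightly shorter argument than the paper's induction.

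As written, though, your conversion from length to prefix is under-justified. Knowing that $C_m$ and $\widehat H_m$ share the prefix $D_m$ is not enough, because $|O_m^{(2)}|=|D_m|+|C_{m-1}|-2>|D_m|$ as soon as $m\ge3$. For example, with $a=2$ and $m=3$ one has $|O_3^{(2)}|=13$ but $|D_3|=11$.

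You need both words to be prefixes of a common word. One option is $C_m=D_mC_{m-1}\prec D_mO_m^{(2)}=\widehat H_m$, which holds because $O_m^{(2)}=C_{m-1}^a\widehat H_{m-1}$ begins with $C_{m-1}$. This is exactly the fact you invoke later in (i), so establish it before (iii) to avoid circularity. With that reordering, or with the word identity above, your proposal is a complete proof.
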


\begin{proof}
Justin's formula for iterated palindromic closure \cite[Sec.~6.2.1]{GlenJustin2009} gives, for every $j\ge1$ and $k\ge0$,
\(\Pal(w_j0^k)=C_j^k\widehat H_j,\)
because $\mu_{w_j}(0)=\mu^{j-1}(0)=C_j$. This immediately gives (ii).

For (iii), write $w_m=w_{m-1}0^a1$. Applying Justin's formula to the final $1$ yields $\widehat H_m=\mu_{w_{m-1}0^a}(1)\Pal(w_{m-1}0^a)=D_mO_m^{(2)}$, since $\mu_{w_{m-1}0^a}=\mu^{m-2}L_0^a$ and $\mu^{m-2}L_0^a(1)=\mu^{m-1}(1)=D_m$. Moreover, $C_m=\mu^{m-2}(0^a10)=D_mC_{m-1}$. We claim that $\widehat H_j\prec D_jC_j$ for every $j\ge1$. This is clear for $j=1$, since $\widehat H_1=\eps$. If it holds for $j$, then $O_{j+1}^{(2)}=C_j^a\widehat H_j\prec C_j^aD_jC_j=C_{j+1}$; consequently, $\widehat H_{j+1}=D_{j+1}O_{j+1}^{(2)}\prec D_{j+1}C_{j+1}$.
Thus the claim follows by induction, and in particular $O_m^{(2)}\prec C_m$, proving (iii).

The identities $C_m=D_mC_{m-1}$ and $\widehat H_m=D_mO_m^{(2)}$, together with the fact that $O_m^{(2)}$ begins with $C_{m-1}$, show that $C_m$ is a prefix of $\widehat H_m$. The palindromic prefix immediately preceding $\widehat H_m=\Pal(w_m)$ in the directive construction is $\Pal(w_{m-1}0^a)=O_m^{(2)}$, which is shorter than $C_m$ by (iii). Hence $\widehat H_m$ is the first, and therefore the shortest, palindromic prefix containing $C_m$. This proves (i).

Finally, by (ii), shifting the occurrence $\widehat B_m=C_m^{a-1}\widehat H_m$ by one $C_m$ block leaves the overlap $C_m^{a-2}\widehat H_m=O_m^{(1)}$. By (iii), the second occurrence of $\widehat B_m$ ends with $D_mO_m^{(2)}$, while the suffix copy of $C_m$ in $S_m=C_m^aD_mC_m$ starts immediately after $C_m^aD_m$ and has prefix $O_m^{(2)}$. Their intersection is therefore exactly $O_m^{(2)}$.
\end{proof}

At the directive cut $w_m0^{a-1}$ the return-word morphism is $\rho_m=\mu^{m-1}L_0^{a-1}$, and
\(S_m=\rho_m(010),\)
because $L_0^{a-1}(010)=0^a10=\mu(0)$. Hence the first two return boundaries give exactly the complete occurrences
\(\widehat B_m@1, \qquad \widehat B_m@(|C_m|+1),\)
with contexts $(\bot,0)$ and $(0,1)$, respectively; at the third boundary only $C_m$ remains, which is shorter than $\widehat B_m$. Thus both occurrences are net.

By \cref{lem:binary-container}(i), $\widehat H_m$ is the shortest palindromic container of $C_m$. Justin--Vuillon's occurrence correspondence therefore applies at the cut $w_m$, whose finite derived word in $S_m$ is $0^a10$. The only return boundary carrying $C_m$ that is preceded by return type $1$ is the boundary before the final suffix copy. Hence the suffix occurrence
\(C_m@(|S_m|-|C_m|+1)\)
has a unique left extension, while its right extension is unique by the boundary convention; it is therefore net. The three displayed net occurrences cover $S_m$ and form an ONOC.

By \cref{lem:binary-container}, its first overlap is $O_m^{(1)}$. At the corresponding directive cut,
\[
S_m=\mu^{m-1}L_0^{a-2}(0010),
\]
so the finite derived word is $0010$. The digram $00$ occurs there exactly once; by \cref{lem:return}, this gives the unique occurrence of $O_m^{(1)}$ with context $(0,0)$. Hence the expanded bridge $0O_m^{(1)}0$ is unique.

Similarly, the second overlap is $O_m^{(2)}$. At its directive cut,
\[
S_m=\mu^{m-2}L_0^a\bigl((10)^a110\bigr),
\]
and the digram $11$ occurs exactly once in the finite derived word $(10)^a110$. Thus $O_m^{(2)}$ has a unique occurrence with context $(1,1)$, and the expanded bridge $1O_m^{(2)}1$ is unique. The ONOC completeness lemma now gives
\(\#\NO(S_m)=3\qquad(m\ge2).\)

\subsection{The initial approximant and the main theorem}

The preceding arguments cover all noninitial approximants, but the first approximant can be too short for the same covering geometry. Fortunately its structure is simple enough to analyze directly: the binary case reduces to powers of $0$, while for $d\ge3$ the unique final letter $z$ sharply separates the possible repeated factors.

If $e_z\ge2$, \cref{thm:rge2} already gives $\#\NO(S_1)=2$. Suppose from now on that $e_z=1$.

We first dispose of the binary case $d=2$. Put $a=e_0$. Then
\(S_1=L_0^aL_1(0)=0^a10.\)
If $a=1$, the two occurrences of $0$ in $010$ are net, so $\#\NO(S_1)=2$. If $a=2$, all three occurrences of $0$ in $0010$ are net. If $a\ge3$, the two occurrences of $0^{a-1}$ starting at positions $1$ and $2$, together with the final occurrence of $0$, are net. No other power $0^k$ is net: for $k<a-1$ at least one one-letter extension is still repeated, while $0^a$ is unique. Since the letter $1$ occurs only once, no repeated factor can contain it. Therefore
\[
\#\NO(S_1)=
\begin{cases}
2,&a=1,\\
3,&a\ge2.
\end{cases}
\]

Assume henceforth that $d\ge3$. By \cref{lem:structural},
\(S_1=C0^{e_0-1}zC.\)
If $e_0=1$, then $S_1=CzC$. The letter $z$ occurs only once, because $C=\nu(0)$ and $\nu$ uses no $L_z$. Thus every factor crossing $z$ is unique. By \cref{lem:Cocc} with $r=1$, the two boundary copies of $C$ are the only occurrences of $C$ and both are net. Any further repeated occurrence is contained in one of these two copies and is therefore a proper sub-occurrence of a net occurrence, hence cannot itself be net. Thus $\#\NO(S_1)=2$.

It remains to consider $a=e_0\ge2$. Put $e=e_{z-1}$ and $\theta=L_0^aL_1^{e_1}\cdots L_{z-2}^{e_{z-2}}$; set $B=\theta(z-1)$ and $D=\theta(0)$. Since $\lambda'_1=\theta L_{z-1}^{e-1}$, we have $X_1=B$, $Z_1=B^{e-1}D$, and $C=B^eD=X_1Z_1$.
We next identify the central palindrome without using borderline generalized-standard-word indexing.
The last occurrence of $0$ in $w_1$ is the last letter of its initial run $0^a$.
The prefix preceding that occurrence is $0^{a-1}$.
The standard last-occurrence formula for iterated palindromic closure \cite{Justin2005} therefore gives
\(\Pal(w_10)=P_1\,(0^{a-1})^{-1}P_1.\)
On the other hand, Justin's formula \cite[Sec.~6.2.1]{GlenJustin2009}, together with $Z_1=\mu_{w_1}(0)$, gives
\(\Pal(w_10)=Z_1P_1.\)
Cancelling the common suffix $P_1$ yields
\[
P_1=Z_1 0^{a-1}.
\tag{5.6}
\]
Equivalently, using $A=X_1Y_1=C0^{a-1}z$ from \cref{lem:structural},
\(Y_1=P_1z.\)
Hence the finite derived prefix $(z-1)z(z-1)0$ at the cut defining $P_1$ yields three complete occurrences of $P_1$ with contexts
\((\bot,z-1),\qquad(z-1,z),\qquad(z,z-1),\)
because the fourth boundary leaves only $Z_1$, and $|Z_1|<|P_1|$ by (5.6). The middle occurrence, starting at $|X_1|+1$, is therefore net. Together with the two net boundary occurrences of $C$ supplied by \cref{lem:Cocc}, this gives three net occurrences in $S_1$.

We finally prove completeness. Let
\(U=C0^{a-1},\)
so that $S_1=UzC$. The prefix occurrence of $C$ and the middle occurrence of $P_1$ overlap exactly in $Z_1$ and together cover $U$. Their expanded overlap is
\[
E=(z-1)Z_1 0.
\tag{5.7}
\]
The word $B=\theta(z-1)$ contains exactly one $z-1$, as its final letter, whereas $D=\theta(0)$ contains no $z-1$. From (5.5), the factor $(z-1)Z_1$ can therefore start in the prefix copy of $C=B^eD$ only at the final letter of its first block $B$; in the suffix copy of $C$ the corresponding occurrence is terminal and has no following letter. Since $a\ge2$, the prefix occurrence is followed by $0$. Thus $E$ occurs exactly once in $S_1$.

Now let $N$ be any additional net occurrence. It cannot cross the unique letter $z$, since then its factor would be unique. If $N$ lies to the right of $z$, it is a proper sub-occurrence of the suffix net occurrence of $C$ and is impossible. Suppose that $N$ lies in $U$. If it is contained in either the prefix occurrence of $C$ or the middle occurrence of $P_1$, it is again a proper sub-occurrence of a net occurrence. Otherwise, because these two occurrences overlap and cover $U$, $N$ must start before the middle occurrence of $P_1$ and end after the prefix occurrence of $C$; hence it contains the expanded overlap $E$. Since $E$ is unique in $S_1$, every occurrence containing $E$ is unique, contradicting that $N$ is repeated. Therefore no additional net occurrence exists, and
\[
\#\NO(S_1)=3
\qquad(d\ge3,\ e_z=1,\ e_0\ge2).
\tag{5.8}
\]

All parameter regimes have now been exhausted. The following theorem simply assembles the two structural branches and the direct analysis of the initial approximant into a single exact classification.

\begin{theorem}[Main theorem]
\label{thm:main}
Let $d\ge2$ and $e_0,\ldots,e_{d-1}\ge1$. For
\(\mu=L_0^{e_0}\cdots L_{d-1}^{e_{d-1}},\qquad S_m=\mu^m(0),\)
one has
\[
\#\NO(S_m)=
\begin{cases}
3,&e_{d-1}=1\text{ and }(m\ge2\text{ or }e_0\ge2),\\
2,&\text{otherwise}.
\end{cases}
\]
In particular, for every $m\ge2$,
\[
\#\NO(S_m)=
\begin{cases}
3,&e_{d-1}=1,\\
2,&e_{d-1}\ge2.
\end{cases}
\]
\end{theorem}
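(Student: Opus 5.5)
The main theorem is an assembly statement, so the plan is to partition the parameter space $(d,e_0,\dots,e_{d-1},m)$ into the regimes already treated and read off the count in each. I first split on the terminal exponent $r=e_{d-1}$.

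If $r\ge2$, \cref{thm:rge2} gives $\#\NO(S_m)=2$ for every $m\ge1$ and every $d\ge2$. Here the condition ``$e_{d-1}=1$'' fails, so the formula also predicts $2$. The only point I would check is that \cref{lem:bridge-rge2} genuinely covers $d=2$, $r\ge3$. It does, because its $r\ge3$ branch uses only \cref{lem:minperiod} and \cref{lem:period-propagation}, neither of which depends on $d$. The binary case $r=2$ is treated separately in that lemma.

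If $r=1$, I split on $m$.

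For $m\ge2$ and $d\ge3$, \cref{thm:r1-dge3} gives $3$. For $m\ge2$ and $d=2$, the binary Sturmian subsection gives $3$: for $a=e_0\ge2$ by the ONOC built from $\widehat B_m@1$, $\widehat B_m@(|C_m|+1)$ and the suffix copy of $C_m$, and for $a=1$ by the reduction $\mu=\varphi^2$ to the Fibonacci theorem of Guo and Kishi. In the latter case $S_m=\varphi^{2m}(0)$ is itself a finite Fibonacci word, so their count of three applies verbatim.

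For $m=1$ I would cite the direct analysis of the initial approximant. When $d=2$ it gives $2$ if $e_0=1$ and $3$ if $e_0\ge2$. When $d\ge3$ it gives $2$ if $e_0=1$, via $S_1=CzC$ and \cref{lem:Cocc}, and $3$ if $e_0\ge2$, via (5.8). This matches the condition ``$m\ge2$ or $e_0\ge2$''.

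Collecting the cases yields the first display. The second display is its specialization to $m\ge2$, where the disjunction ``$m\ge2$ or $e_0\ge2$'' is automatically true.

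The main obstacle is bookkeeping rather than mathematics: confirming that the case tree is exhaustive and that each cited result's hypotheses ($d\ge3$ versus $d=2$, $a\ge2$ versus $a=1$, $m\ge2$ versus $m=1$) match the branch in which it is invoked. The point most worth double-checking is the Fibonacci branch $d=2$, $e_0=e_1=1$, $m\ge2$, since it rests on an external theorem. There I would verify that $S_m$ coincides with the finite Fibonacci word of even index and that the boundary convention for net occurrences matches that of \cite{GuoKishi2025}.
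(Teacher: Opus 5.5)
Your proposal is correct and matches the paper's proof. Both split on $e_{d-1}$; use \cref{thm:rge2} for $e_{d-1}\ge2$ and all $m\ge1$; use \cref{thm:r1-dge3} together with the binary subsection (including the reduction $\mu=\varphi^2$ to Guo--Kishi when $e_0=1$) for $e_{d-1}=1$, $m\ge2$; and use the direct analysis of $S_1$ for the initial approximant. Your extra checks are sound bookkeeping that the paper leaves implicit: that the $r\ge3$ branch of \cref{lem:bridge-rge2} does not depend on $d$, and that $S_m=\varphi^{2m}(0)$ is a genuine finite Fibonacci word under the same boundary convention.
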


\begin{proof}
If $e_{d-1}\ge2$, the statement follows from \cref{thm:rge2} for every $m\ge1$. Suppose $e_{d-1}=1$. For $m\ge2$ and $d\ge3$, \cref{thm:r1-dge3} gives exactly three net occurrences; for $d=2$, the binary Sturmian analysis of the preceding subsection gives the same conclusion. It remains only the initial approximant $m=1$: the preceding analysis of $S_1$ gives two net occurrences when $e_0=1$ and three when $e_0\ge2$. Combining these mutually exhaustive cases proves both displayed formulas.
\end{proof}

The $d$-bonacci family is obtained by setting every partial quotient equal to one. In that case one full directive period is exactly $d$ substitution steps, so the main theorem immediately transfers to the period-aligned $d$-bonacci approximants.

\begin{corollary}[Period-aligned $d$-bonacci approximants]
Let $\sigma$ be the standard $d$-bonacci substitution, defined by $\sigma(i)=0(i+1)$ for $0\le i<d-1$ and $\sigma(d-1)=0$. If $e_0=\cdots=e_{d-1}=1$, then $\mu=\sigma^d$ and hence
\[
\#\NO\bigl(\sigma^{dm}(0)\bigr)=3
\qquad(m\ge2).
\]
\end{corollary}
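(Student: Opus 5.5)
The plan has two parts. First verify the morphism identity $\mu=\sigma^d$ when all exponents equal one. Then read off the count from \cref{thm:main}, since $S_m=\mu^m(0)=\sigma^{dm}(0)$ and $e_{d-1}=1$ puts us in the three-occurrence case for every $m\ge2$. All the combinatorial work is already in the main theorem. The only new content is the identification of $\sigma^d$ with $L_0L_1\cdots L_{d-1}$.

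For the identity, I would not compute $\sigma^d$ letter by letter. Instead I would factor $\sigma$ as a pure epistandard morphism composed with a letter permutation. Let $\pi$ be the cyclic permutation morphism $i\mapsto i+1 \pmod d$. One checks directly that $\sigma=L_0\pi$:
\begin{itemize}[label={}]
\item for $i<d-1$, $L_0(\pi(i))=L_0(i+1)=0(i+1)$;
\item for $i=d-1$, $L_0(\pi(d-1))=L_0(0)=0$.
\end{itemize}
Both agree with $\sigma(i)$. Next I would use the conjugation rule $\pi L_a=L_{a+1}\pi$ (indices mod $d$), which holds because $\pi$ merely renames letters: $\pi(L_a(b))=\pi(a)\pi(b)$ for $b\ne a$, and $\pi(L_a(a))=\pi(a)$. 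This is exactly $L_{\pi(a)}(\pi(b))$.

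Iterating the conjugation rule gives
\[
\sigma^d=(L_0\pi)^d=L_0L_1\cdots L_{d-1}\pi^d.
\]
To justify this, push each $\pi$ to the right past the subsequent $L_0$'s, so the $k$th factor $L_0$ becomes $L_{k-1}$. Since $\pi^d$ is the identity, $\sigma^d=L_0L_1\cdots L_{d-1}=\mu$. An easy induction on $k$ makes this rigorous: $(L_0\pi)^k=L_0L_1\cdots L_{k-1}\pi^k$.

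With $\mu=\sigma^d$ established, $\sigma^{dm}(0)=\mu^m(0)=S_m$. \cref{thm:main} with $e_{d-1}=1$ and $m\ge2$ then gives $\#\NO(S_m)=3$, which is the claim.

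The only step needing care is the bookkeeping of indices in the commutation $\pi L_a=L_{a+1}\pi$ and the order of composition, which is right-to-left as fixed in \cref{sec:epi-basics}. Once this is checked on letters, everything else is immediate. As a sanity check, for $d=2$ the identity specializes to $\varphi^2=L_0L_1$, which agrees with the binary remark before \cref{lem:binary-container}.
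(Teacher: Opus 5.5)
Your proposal is correct and follows the paper's route: identify $S_m=\mu^m(0)$ with $\sigma^{dm}(0)$ and apply \cref{thm:main} with $e_{d-1}=1$ and $m\ge2$. The paper only asserts $\mu=\sigma^d$, so your factorization $\sigma=L_0\pi$ together with the conjugation rule $\pi L_a=L_{a+1}\pi$ and the induction $(L_0\pi)^k=L_0\cdots L_{k-1}\pi^k$ actually supplies the missing verification of that identity.
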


\section{Consequences and Outlook}

We have completely determined the net occurrences in the finite morphic approximants generated by periodic regular epistandard directive words of the form
\(\Delta=\left(0^{e_0}1^{e_1}\cdots(d-1)^{e_{d-1}}\right)^\omega.\)
For every noninitial approximant, the number of net occurrences exhibits the sharp dichotomy
\(\#\NO(S_m)= \begin{cases} 3,&e_{d-1}=1,\\ 2,&e_{d-1}\ge2, \end{cases} \qquad m\ge2.\)
Thus, despite the arbitrarily large alphabet and arbitrary positive exponents in one directive period, the number of net occurrences is controlled by a single parameter: the exponent of the last directive letter.

The two cases have distinct combinatorial geometries. When $e_{d-1}\ge2$, two canonical boundary occurrences overlap and form an ONOC. When $e_{d-1}=1$, a palindromic factor determined by the directive structure contributes a third, central net occurrence. In both cases, return words identify the relevant occurrences and their contexts, while uniqueness of the expanded BNSOs excludes every additional net occurrence.

The period-aligned $d$-bonacci consequence shows that the same three-occurrence geometry persists along the subsequence $\sigma^{dm}(0)$. The remaining residue classes $\sigma^{dm+r}(0)$, $1\le r<d$, are not consequences of the present theorem and form a concrete phase-dependent extension of the problem. More generally, the proof illustrates how return-word techniques and ONOCs complement each other: derived words turn context questions into short finite patterns, whereas ONOCs turn local uniqueness into global completeness.

A natural next question is whether the periodicity assumption on the partial quotients can be removed. General regular epistandard words retain cyclic directive letters but may have arbitrary positive partial quotients \cite{Peltomaki2024}. It would be interesting to determine whether an analogue of the two-or-three dichotomy survives in that wider setting, and more broadly which morphic or more general S-adic (iterated-morphism) systems admit a uniformly bounded number of net occurrences in their canonical finite approximants.

\section*{Declaration on generative AI assistance}
During the preparation of this work, the authors used ChatGPT (OpenAI) to support manuscript organization, language editing, and the checking and presentation of mathematical arguments. After using this tool, the authors reviewed and edited the content as needed and take full responsibility for the content of the article.

\end{document}